\title{Strided Difference Bound Matrices}
\author{%
Arjun Pitchanathan\inst{1}\orcidID{0000-0002-7301-2307} \and 
Albert Cohen\inst{2}\orcidID{0000-0002-8866-5343} \and
Oleksandr Zinenko\inst{2}\orcidID{0000-0003-1978-0222} \and 
Tobias~Grosser\inst{3}\orcidID{0000-0003-3874-6003}
}%
\authorrunning={Pitchanathan et al.}%
\institute{%
University of Edinburgh, UK,
\email{arjun.pitchanathan@ed.ac.uk}, 
\and
Google DeepMind, Paris, France
\email{\{albertcohen,zinenko\}@google.com}
\and
University of Cambridge, UK,
\email{tobias.grosser@cst.cam.ac.uk}}
\date{\today}
\newcommand{\appref}[1]{\hyperref[#1]{Appendix~\ref*{#1}}}
\algnewcommand\Continue{\textbf{continue}}
\algnewcommand\Break{\textbf{break}}
\newcommand{\Rbb}{\mathbb{R}}
\newcommand{\Zbb}{\mathbb{Z}}
\newcommand{\Nbb}{\mathbb{N}}
\newcommand{\Ocal}{\mathcal{O}}
\newcommand{\Dlcm}{D_{\textrm{lcm}}}
\DeclareMathOperator{\lcm}{lcm}
\DeclareMathOperator{\MASD}{MASD}
\newcommand{\meanstddev}[2]{$#1 \pm #2$}
\begin{document}

\maketitle

\begin{abstract}
A wide range of symbolic analysis and optimization problems can be formalized using polyhedra. Sub-classes of polyhedra, also known as sub-polyhedral domains, are sought for their lower space and time complexity. We introduce the Strided Difference Bound Matrix (SDBM) domain, which represents a sweet spot in the context of optimizing compilers. Its expressiveness and efficient algorithms are particularly well suited to the construction of machine learning compilers. We present decision algorithms, abstract domain operators and computational complexity proofs for SDBM. We also conduct an empirical study with the MLIR compiler framework to validate the domain's practical applicability. We characterize a sub-class of SDBMs that frequently occurs in practice, and demonstrate even faster algorithms on this sub-class.
\end{abstract}

\section{Introduction and Motivation}

The analysis and verification of computing systems involves a variety of abstractions of the system semantics.
Among these, numerical abstractions capture arithmetic properties of system variables, supporting mathematical models of systems such as timed and hybrid automata \cite{alur1994theory,Timed,561342} and the static analysis of inductive definitions in loops and recursive programs \cite{10.1145/512760.512770}.
Many of these abstractions implement special cases of Presburger arithmetic \cite{isl} where typical decision problems are NP-hard.
The simplest special cases are non-relational, such as interval bounds $\pm x \le c$ where $x$ is a variable and $c$ is a numeric constant.
More expressive, relational cases include systems of inequalities of the form $\pm x \pm y \leq c$ known as Unit Two Variable Per Inequality (UTVPI) systems.
They form the \emph{octagon abstract domain} \cite{10.5555/832308.837141}.
While being much cheaper to operate upon than convex polyhedra \cite{10.1145/512760.512770,10.1016/j.scico.2005.02.003}, UTVPI are sufficiently expressive to represent a wide range of multi-variable problems \cite{DBLP:journals/corr/abs-cs-0703084}.

UTVPI algorithms rely on a Difference Bound Matrix (DBM) representation \cite{DBLP:conf/ifip/BerthomieuM83}, with inequalities of the form $x - y \leq c$ or $\pm x \leq c$.
DBMs are ubiquitous in formal verification \cite{10.1007/BFb0020949} and static analysis \cite{blanchet2003static}.
Other abstractions such as congruences over linear combinations of integral variables \cite{10.5555/111310.111320} capture only the lattice structure of Presburger sets but not inequalities.
The special case of congruence equalities $x \equiv r \mod d$ where $r, d$ are integral constants and $0 \le r < d$ has low complexity \cite{doi:10.1080/00207168908803778} and is often used to enhance other abstract domains \cite{10.1145/154630.154650}.

It has remained an open problem whether there are efficient algorithms for the conjunction of UTVPI and congruence constraints.
Such a domain would have numerous applications in the analysis and optimization of machine learning (ML) models.
Indeed, modern ML compilers~\cite{XLA,TVM,MLIR,triton} often use a form of Presburger representation for ML compute graphs and operations, e.g.\ to capture the data layout in memory or conversions such as reshaping and padding.

Affine expressions also arise in program transformations to leverage modern hardware, such as vectorization, fusion and thread-level parallelization.
Most of these expressions represent hyper-rectangular shapes, with occasional cases of symmetric and triangular ones (Cholesky factorization and sequence models~\cite{10.5555/3045118.3045374,DBLP:journals/corr/abs-1904-10509}), all of which can be expressed as UTVPI~\cite{10.1145/2429069.2429127}.
On the other hand, strides and block sizes resulting from (dilated) convolutions, pooling and normalization operations as well as the results of the tiling (block-wise decomposition) transformation require congruence constraints.
While some of the most advanced compiler optimizations justify the efforts to implement full-fledged Presburger arithmetic packages such as isl \cite{isl} and FPL \cite{FPL}, the majority of simpler cases call for a definition of a relational abstract domain combining UTVPI and congruences with a low-degree polynomial complexity. We also expect such a domain to be applicable to verification efforts \cite{10.1145/3527328,reinking2022formal,10.1007/978-3-031-13188-2_19} that currently rely on Presburger arithmetic libraries and SMT solvers; we present early results in Section~\ref{sec:tv}.

This paper 
considers the conjunction of inequalities represented as a DBM with single-variable congruences, a novel abstract domain we call \emph{Strided Difference Bound Matrices} (SDBM). We also study a sub-case of these, \emph{Harmonic SDBM} (HSDBM), where such congruences form a harmonic sorted chain, which is
common in congruences produced by loop tiling in high-performance code.

Although the SDBM satisfiability problem turns out to be \NP-hard, we are able to provide and algorithm that runs in $\Ocal(n m \Dlcm)$ time, where $n$ is the number of variables, $m$ is the number of constraints and $\Dlcm$ is the least common multiple of all congruence divisors. This time complexity, which is pseudo-linear in $\Dlcm$, is practical for program analysis applications. We also present an $\Ocal(n^4)$ complexity algorithm for HSDBM satisfiability.



Finally, we define a normal form for SDBM constraint systems that is computable in at most $3m + m\log(n\Dlcm) + n\Dlcm$ satisfiability checks in the general case, and $3m + m \log(n\Dlcm) + n$ checks in the harmonic case. Given two systems in normal form, we show that it only takes linear time to perform the $\mathrm{join}$ operation, producing a constraint set admitting a union of solutions, common in abstract interpretation. Moreover, we can perform an equality check based on direct comparison of normal forms.




\section{DBMs, SDBMs, and HSDBMs}

We consider sets over the integers only, i.e., subsets of $\Zbb^n$ for some $n \in \Nbb$. We first define some notation.
For $m, n \in \Nbb$, $[n]$ denotes the set $\{1, \dots n\}$, $n\Zbb$ denotes the set of integer multiples of $n$, and $m \mid n$ denotes that $m$ divides $n$. 
If $G$ is a weighted graph with no negative cycles and $u$ and $v$ are vertices in it, then $\delta_G(u, v)$ is the distance from $u$ to $v$ in $G$. $\lfloor x \rfloor_y$ refers to $x$ rounded down to the nearest multiple of $y$ smaller than or equal to $x$. If $x, y$ are vectors and $t$ a scalar, $x + t$ refers to element-wise addition.

Let us first formally recall the definition of Difference Bound Matrices (DBM) over integers and their properties~\cite{dill1990dbm,mine2001dbm} before presenting SDBM.

\label{sec:dbm}

\begin{definition}
\label{def:dbm}
A \emph{Difference Bound Matrix (DBM)} is a constraint system over variables $x_1, \dots x_n \in \Zbb$ of the form
\begin{align*}
- x_i + x_j \le c_{ij}
&& \ell_i \le x_i \le u_i
&& (i, j, c_{ij}) \in E \textrm{ and } l_i, l_j \in \Zbb
\end{align*}
where $E \subseteq [n] \times [n] \times \Zbb$ denotes the set of difference bound constraints. We will use $m = |E|$ to denote the number of such constraints.

Not all upper and lower variable bounds $\ell_i$, $u_i$ may be present. When no such variable bounds are present we call the system \emph{variable-bound-free (VBF)}; otherwise we say that the system has variable bounds.
\end{definition}
It is known that the satisfiability of DBM constraints can be determined in $\Ocal(n^3)$ time and $\Ocal(n^2)$ space \cite{DBLP:conf/ifip/BerthomieuM83,DBLP:journals/corr/abs-cs-0703084}. We now define two special cases of Presburger sets derived from DBMs by introducing additional congruence constraints.



\begin{lemma}[DBM Shifting Lemma]
\label{lem:dbm-shift}
If $x$ is a solution to a VBF DBM, then so is $x + t$ for $t \in \Zbb$, i.e., adding a constant to all variables preserves satisfiability.
\end{lemma}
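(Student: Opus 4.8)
The plan is to observe that a VBF DBM consists \emph{only} of difference constraints of the form $-x_i + x_j \le c_{ij}$ for $(i,j,c_{ij}) \in E$, with no constraints of the form $\ell_i \le x_i$ or $x_i \le u_i$ present. So it suffices to check that each difference constraint is invariant under the substitution $x \mapsto x + t$.

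First I would fix an arbitrary solution $x = (x_1, \dots, x_n)$ and an arbitrary $t \in \Zbb$, and let $x' = x + t$, i.e.\ $x'_i = x_i + t$ for every $i \in [n]$. Then for each $(i, j, c_{ij}) \in E$ I would compute
\begin{align*}
-x'_i + x'_j = -(x_i + t) + (x_j + t) = -x_i + x_j \le c_{ij},
\end{align*}
where the inequality holds because $x$ is a solution. Hence $x'$ satisfies every constraint in $E$, and since the system is VBF there are no further constraints to verify, so $x'$ is also a solution.

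There is essentially no obstacle here: the only thing to be careful about is that the VBF hypothesis is genuinely used — the argument would fail for a system with variable bounds, since $\ell_i \le x_i + t$ need not follow from $\ell_i \le x_i$. I would therefore state explicitly that the cancellation of $t$ is exactly what difference constraints allow, and that this is why the lemma is restricted to the variable-bound-free case.
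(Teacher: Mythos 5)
Your proof is correct and follows the same approach as the paper's one-line argument: all constraints in a VBF DBM bound differences of variables, and those differences are unchanged by adding a constant to every variable. You simply spell out the cancellation explicitly and note where the VBF hypothesis is used, which is a fine elaboration of the same idea.
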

\begin{proof}
All constraints are bounds on differences of variables, and the differences don't change when adding a constant to all variables.
\end{proof}
\begin{corollary}
If $S_{i,t}$ is the set of solutions to a VBF DBM such that $x_i = t$, then $S_{i,t} = \{x + t \mid x \in S_{i,0}\}$.
\end{corollary}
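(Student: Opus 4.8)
The plan is to prove the set equality by a routine double inclusion, with each direction being a one-line application of the DBM Shifting Lemma (Lemma~\ref{lem:dbm-shift}). Note that the lemma applies because the DBM is VBF, and that all shifts involved stay integral since $t \in \Zbb$, so there is no subtlety there; the whole argument is bookkeeping about coordinates.

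First I would prove $\{x + t \mid x \in S_{i,0}\} \subseteq S_{i,t}$. Take any $x \in S_{i,0}$, so $x$ satisfies the VBF DBM and $x_i = 0$. By the Shifting Lemma applied with the scalar $t$, the point $x + t$ also satisfies the DBM, and its $i$-th coordinate is $x_i + t = 0 + t = t$. Hence $x + t \in S_{i,t}$, which gives this inclusion.

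For the reverse inclusion $S_{i,t} \subseteq \{x + t \mid x \in S_{i,0}\}$, take any $y \in S_{i,t}$, so $y$ satisfies the DBM and $y_i = t$. Apply the Shifting Lemma with the scalar $-t$ to conclude that $y - t$ satisfies the DBM, and its $i$-th coordinate is $y_i - t = t - t = 0$, so $y - t \in S_{i,0}$. Since $y = (y - t) + t$, we get $y \in \{x + t \mid x \in S_{i,0}\}$. Combining the two inclusions yields the claim. There is no real obstacle here — the statement is essentially a restatement of the Shifting Lemma that additionally tracks the value of the pinned coordinate $x_i$ — so the only thing to be careful about is stating the two applications of the lemma with the correct signs ($+t$ in one direction, $-t$ in the other).
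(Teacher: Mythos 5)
Your proof is correct and is exactly the argument the paper intends: the corollary is stated without proof as an immediate consequence of the Shifting Lemma, and your double inclusion with shifts $+t$ and $-t$ is the standard way to spell that out.
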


Given a DBM with variable bounds, we can construct a new VBF system by adding a new variable $x_0$ and converting all variable bounds $\ell_i \le x_i \le u_i$ to difference bound constraints $\ell_i \le x_i - x_0 \le u_i$.
Clearly $(x_1, \dots x_n)$ is a solution to the original system iff $(0, x_1, \dots x_n)$ is a solution to the new system. By the above lemma, the new system has a solution with $x_0 = 0$ iff it has any solution. Thus the original DBM with variable bounds is satisfiable iff the new VBF DBM is. VBF DBMs are best understood by analyzing their \emph{potential graphs}.

\begin{definition}
The \emph{potential graph of a DBM} is a weighted directed graph over vertex set $[n]$ with an edge from $i$ to $j$ of weight $c_{ij}$ for each $(i, j, c_{ij}) \in E$. The weights may be negative and the graph may contain negative cycles.
\end{definition}



\begin{lemma}
\label{lem:dbm-path}
Let $G = ([n], E)$ be the potential graph of a DBM. If $G$ has a path from vertex $u$ to $v$ of total weight $W$,
then $-x_u + x_v \le W$ for every solution $x$ to the DBM.
\end{lemma}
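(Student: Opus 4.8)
The plan is to argue by a straightforward telescoping of the difference-bound constraints along the path, formalized as an induction on the number of edges. Write the path as $u = w_0, w_1, \dots, w_k = v$, so that $(w_{t-1}, w_t, c_{w_{t-1}w_t}) \in E$ for each $t \in [k]$ and $W = \sum_{t=1}^k c_{w_{t-1}w_t}$. Fix any solution $x$ of the DBM. By Definition~\ref{def:dbm}, each edge on the path contributes the valid inequality $-x_{w_{t-1}} + x_{w_t} \le c_{w_{t-1}w_t}$.

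For the base case $k = 1$ the claimed inequality $-x_u + x_v \le W$ is literally the constraint associated with the single edge $(u, v, W)$. (If one allows the degenerate case $k = 0$, then $u = v$ and $W = 0$, and $-x_u + x_v = 0 \le 0$ holds trivially.) For the inductive step, split the path into its first $k-1$ edges, a path from $u$ to $w_{k-1}$ of weight $W' = \sum_{t=1}^{k-1} c_{w_{t-1}w_t}$, and the final edge $(w_{k-1}, v, c_{w_{k-1}v})$ with $W = W' + c_{w_{k-1}v}$. The induction hypothesis gives $-x_u + x_{w_{k-1}} \le W'$, and the final edge gives $-x_{w_{k-1}} + x_v \le c_{w_{k-1}v}$; adding these two inequalities cancels the $x_{w_{k-1}}$ term and yields $-x_u + x_v \le W' + c_{w_{k-1}v} = W$, completing the induction. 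Equivalently, one may simply sum all $k$ edge inequalities at once and observe that the left-hand sides telescope to $-x_u + x_v$ while the right-hand sides sum to $W$.

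There is no real obstacle here: the only thing worth noting is that the argument nowhere requires the path to be simple, since we merely add up valid inequalities and the vertices that appear as both a head and a tail cancel regardless of repetition; the absence of a sign or cancellation subtlety is exactly what makes the telescoping clean. I would keep the written proof to the two- or three-line telescoping form, mentioning the inductive phrasing only if a referee prefers full rigor on the cancellation.
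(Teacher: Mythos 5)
Your telescoping/induction argument is correct and is the standard proof of this fact; the paper itself states the lemma without proof, treating it as the well-known DBM property, and your argument is exactly the canonical one that would be supplied. No gaps: summing the edge inequalities along the path (simple or not) cancels the intermediate variables and yields $-x_u + x_v \le W$.
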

\begin{corollary}
\label{lem:dbm-dist-bound}
If the graph has negative cycles, then no solution $x$ exists.

If the graph has no negative cycles, then for all $u, v \in [n]$, it holds that $-x_u + x_v \le \delta_G(u, v)$.
This is useful to define a normal form of the DBM.
\end{corollary}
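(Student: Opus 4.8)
The plan is to derive both claims directly from Lemma~\ref{lem:dbm-path}, so no new machinery is needed. For the first claim, suppose the potential graph contains a negative cycle, and let $u$ be any vertex on it. That cycle is in particular a path from $u$ to $u$ of total weight $W < 0$, so Lemma~\ref{lem:dbm-path} applied to any hypothetical solution $x$ gives $-x_u + x_u \le W$, i.e.\ $0 \le W < 0$, a contradiction. Hence no solution exists.

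For the second claim, assume $G$ has no negative cycle, so that shortest-path distances $\delta_G(u,v)$ are well defined (possibly $+\infty$ when no $u$-to-$v$ path exists). Fix $u, v \in [n]$ and a solution $x$. If there is no path from $u$ to $v$ then $\delta_G(u,v) = +\infty$ and the inequality holds vacuously. Otherwise, since there are no negative cycles, a minimum-weight path from $u$ to $v$ exists and has weight exactly $\delta_G(u,v)$; applying Lemma~\ref{lem:dbm-path} to this path yields $-x_u + x_v \le \delta_G(u,v)$, as required.

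There is essentially no obstacle here: the content is packaged entirely in Lemma~\ref{lem:dbm-path}, and the only minor point to handle carefully is the well-definedness of $\delta_G$ in the absence of negative cycles (and the $+\infty$ convention for disconnected pairs). The remark that this is useful for a normal form is not part of the formal statement and needs no proof; it is simply foreshadowing that the tightest implied bounds $\delta_G(u,v)$ will serve as canonical constraint coefficients.
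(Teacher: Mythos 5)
Your proof is correct and matches the paper's intent: the paper states this as an immediate corollary of Lemma~\ref{lem:dbm-path} (offering no separate proof), and your argument---a negative cycle forces $0 \le W < 0$ for any solution, while absence of negative cycles lets you apply the lemma to a shortest $u$-to-$v$ path---is exactly the standard derivation being implicitly invoked. The care about $\delta_G$ being well defined and the $+\infty$ convention for unreachable pairs is a reasonable extra touch, not a deviation.
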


\begin{definition}
A \emph{path-closed DBM} is one that is satisfiable and, for all $u, v$ such that there exists a path from $u$ to $v$ in the potential graph $G$, the bound on $-x_u + x_v$ exists and is equal to $\delta_G(u, v)$.
\end{definition}

Clearly, any DBM can be brought to path-closed form by computing the distances in the potential graph, and by \autoref{lem:dbm-dist-bound}, doing so does not change the solution set. Moreover, the path-closed form has the following useful property.

\begin{lemma}[DBM Projection Lemma]
\label{lem:dbm-proj}
If a DBM is path-closed, then the projection of its solution set onto a subset of variables is equal to the
solution set of the constraints involving only those variables.
\end{lemma}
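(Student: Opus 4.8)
The plan is to establish the two inclusions separately. The inclusion ``$\subseteq$'' is immediate: if $x$ solves the full DBM and $y = x|_S$ is its restriction to the chosen variable set $S$, then any constraint mentioning only variables of $S$ is satisfied by $x$ and hence by $y$. The content is the reverse inclusion, namely that every $y \in \Zbb^S$ satisfying the constraints involving only $S$ extends to a solution of the whole DBM. I would first reduce to the VBF case using the virtual-variable construction described above (adding $x_0$ and turning variable bounds into difference constraints): a partial assignment satisfying the $S$-constraints of the original system corresponds, together with $x_0 = 0$, to one satisfying the $(S \cup \{0\})$-constraints of the enlarged path-closed VBF system, so it suffices to handle VBF DBMs.

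For a VBF path-closed DBM I would induct on $|[n] \setminus S|$, the base case $S = [n]$ being trivial. For the inductive step it is enough to ``add back'' one variable $x_k$, $k \notin S$: given $y \in \Zbb^S$ satisfying every difference constraint with both endpoints in $S$, pick an integer value $t$ for $x_k$ so that $y$ together with $x_k = t$ satisfies all difference constraints among $S \cup \{k\}$, and then apply the inductive hypothesis to the same path-closed DBM with the larger set $S \cup \{k\}$. The only genuinely new requirements on $t$ come from edges at $k$: each edge $(u,k,c_{uk})$ with $u \in S$ forces $t \le y_u + c_{uk}$, each edge $(k,v,c_{kv})$ with $v \in S$ forces $t \ge y_v - c_{kv}$, and a self-loop at $k$ only asserts $0 \le c_{kk}$, which holds since a satisfiable DBM has no negative cycles. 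Hence $t$ must lie in the interval $[\,\max_v(y_v - c_{kv}),\ \min_u(y_u + c_{uk})\,]$, and because all the data are integral this interval contains an integer whenever it is nonempty.

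The crux is nonemptiness of that interval, i.e.\ the inequalities $-y_u + y_v \le c_{uk} + c_{kv}$ for all $u,v \in S$ with edges $(u,k)$ and $(k,v)$ present; this is exactly where path-closedness enters. Since $u \to k \to v$ is a path of weight $c_{uk} + c_{kv}$ in the potential graph $G$, the definition of a path-closed DBM guarantees that the bound on $-x_u + x_v$ is present and equals $\delta_G(u,v) \le c_{uk} + c_{kv}$; as $u,v \in S$, this bound is one of the constraints $y$ was assumed to satisfy, so $-y_u + y_v \le c_{uv} \le c_{uk} + c_{kv}$. The degenerate case $u = v$ collapses to $c_{uk} + c_{ku} \ge 0$, again by absence of negative cycles. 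The main obstacle I anticipate is precisely this step: being sure that path-closure actually materializes every implied bound the argument needs; everything else is routine bookkeeping.
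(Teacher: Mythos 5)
The paper never proves this lemma---it is recalled as a standard property of DBMs (with citations), so there is no in-paper argument to compare against. Your core argument for the VBF case is the standard proof and is correct: induct on the number of missing variables, re-introduce one variable $x_k$ at a time, observe that the admissible values for $x_k$ form the integer interval $[\max_v(y_v - c_{kv}), \min_u(y_u + c_{uk})]$, and use path-closure to get $-y_u + y_v \le c_{uv} = \delta_G(u,v) \le c_{uk} + c_{kv}$ (with absence of negative cycles handling $u = v$ and self-loops), so the interval is nonempty and, by integrality of the data, contains an integer. That is exactly the right use of the hypothesis, and it matches how the paper later imitates this argument for HSDBMs in Lemma~\ref{lem:proj-hr-suffix}.

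The gap is in your reduction of the bounded case to the VBF case. You call the system obtained by adding $x_0$ ``the enlarged path-closed VBF system,'' but adding $x_0$ and the bound edges does not preserve path-closure: the paper's definition of path-closure only quantifies over the potential graph on $[n]$, so a path through $x_0$ (or through a bounded variable) may witness a bound that is simply absent. In fact, under that literal definition the lemma is false for DBMs with variable bounds: take $x_1 \le 0$ and $-x_1 + x_2 \le 0$. The potential graph has the single edge $1 \to 2$ with weight $0$, so the system is path-closed as defined, yet the projection onto $\{x_2\}$ is $\{x_2 \in \Zbb : x_2 \le 0\}$ while no constraint mentions only $x_2$. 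So you cannot both keep the original $S$-constraints and claim the enlarged system is path-closed; if you re-close the enlarged graph, new constraints among $S \cup \{0\}$ appear (here $x_2 - x_0 \le 0$) that your partial assignment was never required to satisfy, and the reduction collapses. The fix is to state and prove the lemma for VBF DBMs---which is how the paper actually uses it---or, for bounded DBMs, to require closure of the graph extended with the bound edges (equivalently, path-closure of the $x_0$-augmented system) and to count the induced bounds on the projected variables among ``the constraints involving only those variables.'' With that restriction your inductive argument goes through unchanged.
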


It follows that the path-closed form is the tightest constraint system with the same solution set as the original system, i.e., in a path-closed DBM there exist solutions on the surface of every inequality, so no inequality can be further tightened without changing the solution set. Moreover, if there is no constraint on some $-x_i + x_j$ then adding any upper bound on this changes the solution set. 
Finally, the following is useful to compute a complete explicit solution.

\begin{lemma}
\label{lem:dbm-vert-dist-soln}
For any vertex $u$ in the potential graph from which all other vertices are reachable, the assignment $x_v = \delta_G(u, v)$ satisfies the DBM.
\end{lemma}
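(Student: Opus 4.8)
The plan is to verify every difference-bound constraint directly, using the triangle inequality for shortest-path distances. First I would pin down well-definedness of the proposed assignment. The statement presupposes that $\delta_G(u,v)$ makes sense, so I would argue that $G$ has no negative cycle: if it did, that cycle would be reachable from $u$ (every vertex is), which would force $\delta_G(u,\cdot) = -\infty$; moreover, by \autoref{lem:dbm-dist-bound} the DBM would then be unsatisfiable, so there is nothing to prove. Hence we may assume $G$ has no negative cycle, in which case each $\delta_G(u,v)$ is a finite integer (a sum of finitely many integer edge weights), and in particular $x := (\delta_G(u,1), \dots, \delta_G(u,n)) \in \Zbb^n$, as required for a solution.

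Next I would check the constraints one at a time. Fix a difference-bound constraint $-x_i + x_j \le c_{ij}$, i.e.\ an edge $(i, j, c_{ij}) \in E$. Take a shortest path from $u$ to $i$ (of total weight $\delta_G(u,i)$, which exists since $i$ is reachable from $u$ and there are no negative cycles) and append the edge $i \to j$; this yields a walk from $u$ to $j$ of total weight $\delta_G(u,i) + c_{ij}$, so $\delta_G(u,j) \le \delta_G(u,i) + c_{ij}$. Rearranging, $-\delta_G(u,i) + \delta_G(u,j) \le c_{ij}$, which is precisely the constraint evaluated at the assignment $x_v = \delta_G(u,v)$. As this holds for every edge of the potential graph, all difference-bound constraints are satisfied.

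Finally, if the DBM carries explicit variable bounds rather than being VBF, I would reduce to the case just handled via the construction described earlier in this section: add a variable $x_0$, rewrite each $\ell_i \le x_i \le u_i$ as $\ell_i \le x_i - x_0 \le u_i$, and apply the VBF argument to the augmented potential graph (taking $u$ to be the required source reaching all of $x_0, x_1, \dots, x_n$). I do not expect a genuine obstacle here; the only point requiring care is the bookkeeping around finiteness of $\delta_G$ — keeping both the reachability hypothesis and the no-negative-cycle reduction explicit — so that the triangle-inequality step is an inequality between integers rather than a vacuous statement about infinities.
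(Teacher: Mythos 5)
Your core argument is correct and is exactly the standard one the paper has in mind (the paper states this lemma without proof): for each edge $(i,j,c_{ij})$, appending that edge to a shortest $u$--$i$ path gives $\delta_G(u,j) \le \delta_G(u,i) + c_{ij}$, i.e.\ $-x_i + x_j \le c_{ij}$, and the no-negative-cycle/finiteness bookkeeping you do up front is the right way to make $\delta_G$ well defined. One caveat about your final paragraph: the lemma should simply be read in the VBF setting in which the paper uses it, because with explicit variable bounds the statement is literally false under the paper's definition of the potential graph (the bounds contribute no edges; e.g.\ a single variable with $x_1 \le -5$ forces $x_1 = \delta_G(u,u) = 0$, violating the bound). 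The $x_0$-augmentation you sketch does not rescue the literal statement either: it proves that the \emph{augmented} VBF system has the distance solution, and recovering a solution of the original bounded DBM requires shifting so that $x_0 = 0$, after which the assignment is $\delta_{G'}(u,v) - \delta_{G'}(u,0)$ rather than $\delta_G(u,v)$ in the original graph. So drop that paragraph (or rephrase it as a remark about satisfiability of bounded DBMs) rather than presenting it as part of the proof of this lemma.
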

Note that in this solution, $x_u = 0$.
%
%
%
%
%
We now define the new abstract domains.

\begin{definition}
\label{def:SDBM}
A \emph{Strided DBM (SDBM)} is a DBM with additional constraints
\begin{align*}
x_i \equiv r_i \mod d_i && i \in [n]
\end{align*}
where all $d_i, r_i$ are in $\Zbb$. When referring to such a system, $\Dlcm$ will denote $\lcm(d_1, \dots d_n)$. Given an SDBM, we define the \emph{underlying DBM} as the constraint system without these congruence constraints.
\end{definition}
\noindent
Note that one may encode the lack of a congruence constraint as $x_i \equiv 0 \mod 1$.

\begin{definition}
\label{def:hSDBM}
A \emph{Harmonic SDBM (HSDBM)} constraint system is an SDBM where the congruence divisors are sorted and each one divides the next, i.e., $d_1 \mid d_2 \mid \dots \mid d_n$.
\end{definition}

\section{Satisfiability}


We start by reducing the SDBM satisfiability problem to a simpler form. Firstly, let $y_i = x_i - r_i$. Then we can see that $x_i \equiv r_i \mod d_i$ iff $y_i \equiv 0 \mod d_i$. Furthermore, $-x_i + x_j \le c_{ij}$ iff $-y_i + y_j \le c_{ij} + r_i - r_j$. Thus the original SDBM
\begin{align*}
    x_i \equiv r_i \mod d_i &&
    -x_i + x_j \le c_{ij} &&
    \ell_i \le x_i \le u_i
\end{align*}
is satisfiable iff the following system is:
\begin{align*}
    y_i \equiv 0 \mod d_i \qquad
    - y_i + y_j \le c_{ij} + r_i - r_j \qquad
    \ell_i - r_i \le y_i \le u_i - r_i.
\end{align*}
Thus we reduce satisfiability of any SDBM to the satisfiability of another SDBM where all congruence constraints have remainder zero. We can further reduce satisfiability of SDBMs with variable bounds to satisfiability of VBF SDBMs.
To do this, we generalize the DBM shifting lemma to SDBMs.

\begin{lemma}[SDBM Shifting Lemma]
\label{lem:sdbm-shift}
If $x$ is a solution to a VBF SDBM, then so is $x + t\Dlcm$ for $t \in \Zbb$.
\end{lemma}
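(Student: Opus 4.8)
The goal is to show that if $x$ solves a VBF SDBM, then $x + t\Dlcm$ does too, for any $t \in \Zbb$. There are two families of constraints to check: the difference bound constraints inherited from the underlying DBM, and the congruence constraints $x_i \equiv r_i \bmod d_i$.

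The plan is to handle the two families separately. For the difference bound constraints, I would simply invoke the DBM Shifting Lemma (\autoref{lem:dbm-shift}): since the underlying DBM is itself VBF, adding the constant $t\Dlcm$ to every coordinate leaves every difference $-x_i + x_j$ unchanged, so all inequalities $-x_i + x_j \le c_{ij}$ remain satisfied. For the congruence constraints, the key observation is that $d_i \mid \Dlcm$ for every $i$, since $\Dlcm = \lcm(d_1, \dots, d_n)$ by definition. Hence $t\Dlcm \equiv 0 \bmod d_i$, and therefore $(x_i + t\Dlcm) \equiv x_i \equiv r_i \bmod d_i$; the congruence is preserved. Combining the two observations, $x + t\Dlcm$ satisfies every constraint of the VBF SDBM.

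This argument is essentially immediate once the two pieces are isolated; there is no real obstacle. The only point requiring the slightest care is to note explicitly that $\Dlcm$ is a common multiple of all the $d_i$ (so each individual congruence is unaffected by the shift), and that the lemma is stated for VBF systems precisely so that the DBM Shifting Lemma applies to the underlying DBM without interference from variable bounds $\ell_i \le x_i \le u_i$, which a shift would in general violate.
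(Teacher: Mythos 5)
Your proposal is correct and matches the paper's own argument: the paper likewise invokes the DBM Shifting Lemma (\autoref{lem:dbm-shift}) for the difference bounds and observes that $t\Dlcm$ is a multiple of every congruence divisor, so the congruences are preserved. Your extra remark about why the VBF hypothesis is needed is accurate but does not change the substance.
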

\begin{proof}
By the DBM shifting lemma (\autoref{lem:dbm-shift}), the inequality constraints continue to be satisfied. Since the scalar being added is a multiple of all the congruence divisors, the congruence constraints also continue to be satisfied.
\end{proof}
\begin{corollary}
\label{lem:sdbm-equiv}
For a given VBF SDBM with the congruence constraint on $x_n$ being $x_n \equiv 0 \mod \Dlcm$, let $S_t$ be the set of solutions such that $x_n = t$. Then $S_t = \{x + t \mid x \in S_0\}$ for $t \in \Dlcm \Zbb$. (Of course, $S_t = \varnothing$ for non-congruent $t$).
\end{corollary}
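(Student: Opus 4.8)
The plan is to derive this directly from the SDBM Shifting Lemma (\autoref{lem:sdbm-shift}), exactly mirroring the corollary that followed the DBM Shifting Lemma. First I would dispatch the parenthetical claim: if $x \in S_t$ then $x_n = t$, and the congruence constraint $x_n \equiv 0 \mod \Dlcm$ forces $t \in \Dlcm\Zbb$; contrapositively, $S_t = \varnothing$ whenever $t \notin \Dlcm\Zbb$. So it suffices to treat a fixed $t \in \Dlcm\Zbb$, which I would write as $t = k\Dlcm$ with $k \in \Zbb$ so that the shifting lemma applies cleanly (renaming its index variable to avoid clashing with the $t$ of the corollary).

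For the inclusion $\{x + t \mid x \in S_0\} \subseteq S_t$, I would take $x \in S_0$, so that $x$ satisfies the VBF SDBM and $x_n = 0$; applying \autoref{lem:sdbm-shift} with shift amount $k\Dlcm = t$ shows $x + t$ is again a solution, and its $n$-th coordinate is $0 + t = t$, hence $x + t \in S_t$. For the reverse inclusion $S_t \subseteq \{x + t \mid x \in S_0\}$, I would take $y \in S_t$; since $-t = (-k)\Dlcm$ is also an integer multiple of $\Dlcm$, \autoref{lem:sdbm-shift} gives that $y - t$ is a solution with $n$-th coordinate $t - t = 0$, so $y - t \in S_0$ and therefore $y = (y - t) + t$ lies in the right-hand set. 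Combining the two inclusions yields the claimed equality.

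There is no real obstacle here: the whole argument is a two-directional application of the shifting lemma together with the element-wise-addition bookkeeping. The only point requiring care is that the shift amount must be a multiple of $\Dlcm$ for the congruence constraints to be preserved, and this is exactly what the hypothesis $t \in \Dlcm\Zbb$ supplies (and, as noted above, it holds automatically for every $t$ with $S_t \neq \varnothing$).
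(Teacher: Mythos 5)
Your proof is correct and matches the paper's intent exactly: the corollary is stated immediately after the SDBM Shifting Lemma (\autoref{lem:sdbm-shift}) precisely because it follows from a two-directional application of that lemma (shifting by $t$ and by $-t$), plus the routine observation that the congruence on $x_n$ empties $S_t$ for non-congruent $t$. No gaps.
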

We convert SDBMs to VBF form similarly to the procedure for DBMs. Let $C$ be an SDBM with variable bounds and all remainders zero. Now create a VBF SDBM $C'$ by adding a variable $x_0$ and replacing the constant bounds $\ell_i \le x_i \le u_i$ of $C$ with inequalities $\ell_i \le x_i - x_0 \le u_i$. Then the set of solutions of $C$ is equal to the set of solutions of $C'$ such that $x_0 = 0$. Now by the above corollary, if we add the constraint that $x_0 \equiv 0 \mod \Dlcm$, then $C'$ is satisfiable iff $C$ is satisfiable.
Thus satisfiability of SDBMs with variable bounds can be efficiently reduced to satisfiability of the following simpler class of SDBMs.

\begin{definition}
\label{def:sdbm-simple}
A constraint system of the form
\begin{align*}
    x_i \in d_i \Zbb
    && -x_i + x_j \le c_{ij}
    && (i, j, c_{ij}) \in E
\end{align*}
is called a \emph{simple SDBM}. We sometimes refer to $d_i$ as the \emph{stride} of the variable $x_i$. When the system satisfies $d_1 \mid \dots \mid d_n$, we call it a \emph{simple HSDBM}.
\end{definition}

Not all SDBMs have an equivalent simple SDBM. Let us describe how to find such a simple representation if one exists.
An unsatisfiable SDBM can, of course, be represented as a simple SDBM with the same empty solution set. A satisfiable SDBM $C$ with variable bounds can never have the same solution set as a variable-bound-free SDBM $C'$, because by the shifting lemma (\autoref{lem:sdbm-shift}), for any solution $x$ of $C'$, there exists a constant $D$ such that $x + kD$ is also a solution for any integer $k$. This establishes that the solution set of $C'$ does not satisfy any variable bounds, so $C'$ has a different solution set than $C$.

A VBF SDBM $C$ with non-zero remainders admits a simple SDBM representation if there is a way to replace its congruence constraints with zero-remainder constraints while preserving the same solution set. This can be determined by computing the possible remainders of all variables modulo $\Dlcm$. By the shifting lemma, a remainder $x_i \equiv r_i \mod \Dlcm$ is possible iff there is a solution with $x_i = r_i$, which amounts to a satisfiability check.

Let $g_i$ be the GCD of all possible remainders obtained above and $\Dlcm$. By the shifting lemma, $g_i$ is the GCD of all valid values of $x_i$. To ensure that our new congruence constraint for $x_i$ does not invalidate any solutions of the original system, it is necessary and sufficient that the new divisor be a divisor of $g_i$.

To disallow any extraneous solution, we make the congruence constraint as sparse as possible. Consider the system $C'$ with congruence constraints $x_i \equiv 0 \mod g_i$ and the inequality constraints of $C$. $C$ can be represented by a simple SDBM with the same solution set iff $C$ and $C'$ have the same solution set.

\subsection{GCD-Tightening constraints}
If a DBM is unsatisfiable, repeatedly applying the following inference rule will produce a contradiction eventually.
\begin{align*}
-x_i + x_j \le c_{ij} \land -x_j + x_k \le c_{jk} \Rightarrow -x_i + x_k \le c_{ij} + c_{jk} && \text{ (path inference rule) }
\end{align*}
This is because if the DBM is unsatisfiable then a negative cycle exists, 
and in that case, repeatedly applying the above leads to an inequality of the form $0 \le c$ for some negative $c$. In an SDBM, if the underlying DBM is unsatisfiable then the above is true. However, it is possible for an unsatisfiable SDBM to have its underlying DBM be satisfiable. Consider the following example:
\begin{align*}
    x, y \in 2\Zbb &&
    1 \le x - y \le 1
\end{align*}
The inequalities on their own are clearly satisfiable over the integers. However, because both $x$ and $y$ are even, $x - y$ cannot be $1$ as required by the above system. Due to the congruence constraints, $x - y \le 1$ implies $x - y \le 0$ and similarly $1 \le x - y$ implies $2 \le x - y$, so the system is unsatisfiable.
In general, by Bézout's lemma, when $x \in a\Zbb$, $y \in b\Zbb$, then $x - y \in \gcd(a, b)\Zbb$. Thus we can always tighten bounds to a multiple of the GCD, leading to a new inference rule:
\begin{align*}
    -x_i + x_j \le c_{ij} \implies -x_i + x_j \le \lfloor c_{ij} \rfloor_{\gcd(d_i, d_j)} && \text{ (GCD-tightening rule) }
\end{align*}
We use the above to define a GCD-tight SDBM.

\begin{definition}
A \emph{GCD-tight SDBM} is one where, for all $i, j \in [n]$, we have $c_{ij} \mid \gcd(d_i, d_j)$.
\end{definition}

These two rules are still not sufficient to determine if an SDBM is satisfiable. The following system is GCD-tight, path-closed, and the inequalities are satisfiable over integers, but the system as a whole is unsatisfiable.
\begin{align}
    x &\equiv 0 \mod 4 \cdot 5 \nonumber && 0 \le y - x \le 5 \\ \label{eq:sdbm-counter}
    y &\equiv 0 \mod 5 \cdot 7 && 20 \le x - z \le 24 \\ 
    z &\equiv 0 \mod 4 \cdot 7 && \nonumber 21 \le y - z \le 28 \nonumber
\end{align}
To see that it is unsatisfiable, reparameterize the solution as $(c + a, c + b, c)$; this vector is a solution to the congruences iff
\begin{align*}
    c \equiv a &\mod 4 \cdot 5 &
    c \equiv b &\mod 5 \cdot 7 &
    c \equiv 0 &\mod 4 \cdot 7
\end{align*}
which by the general Chinese remainder theorem~\cite{ore1952crt} has solutions iff
\begin{align*}
    a &\equiv b \mod 5 &
    a &\equiv 0 \mod 4 &
    b &\equiv 0 \mod 7.
\end{align*}
Since the solution is of the form $(a, b, 0) + c$, it satisfies the inequalities iff $(a, b, 0)$ does, by \autoref{lem:dbm-shift}. Thus the inequalities hold iff
\begin{align*}
    0 &\le b - a \le 5 &
    20 &\le a \le 24 &
    21 &\le b \le 28.
\end{align*}
Due to the congruence constraints we have $a \in \{20, 24\}$, $b \in \{21, 28\}$, and $b - a \in \{0, 5\}$, which cannot be satisfied simultaneously, so the SDBM is unsatisfiable.
For the case of HSDBMs however, path-closure and GCD-tightening suffice.

\subsection{Satisfiability for HSDBMs in $O(n^4)$ time}
By the earlier discussion, we can assume that the given HSDBM is simple. In this case, path-closure and GCD-tightening are sufficient to determine satisfiability. To show this, we prove a projection lemma for HSDBMs; while the general projection lemma for DBMs (\autoref{lem:dbm-proj}) does not apply to HSDBMs, it does hold when the subset of variables chosen forms a suffix.
We will call an HSDBM path-closed when its underlying DBM is path-closed.

\begin{restatable}{lemma}{projHSDBMSuffix}
\label{lem:proj-hr-suffix}
Let $H$ be a path-closed, GCD-tight VBF HSDBM. If $S_{k:n}$ is the projection of the solution set of $H$ onto $x_k, \dots x_n$, then $S_{k:n}$ is equal to the set of solutions to the inequalities and congruence constraints involving only $x_k, \dots x_n$.
\end{restatable}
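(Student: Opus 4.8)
\emph{Proof proposal.}
The plan is to prove the nontrivial inclusion by projecting out the prefix variables $x_1,\dots,x_{k-1}$ one at a time, from $x_1$ upwards, and showing that each elimination step neither drops nor creates a constraint. The easy inclusion holds because restricting any solution of $H$ to $x_k,\dots,x_n$ satisfies every constraint mentioning only those variables. For the converse I would isolate the following single-variable claim and iterate it: if $H$ is a path-closed, GCD-tight VBF HSDBM on $x_1,\dots,x_n$ (which, by the reduction at the start of this section, we may take to be simple) and $H'$ is the subsystem consisting of exactly the constraints of $H$ that do not mention $x_1$, then (i) $H'$ is again a path-closed, GCD-tight VBF HSDBM on $x_2,\dots,x_n$, and (ii) every solution of $H'$ extends to a solution of $H$. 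From (i)+(ii) we get $\proj_{2:n}(\mathrm{sol}(H))=\mathrm{sol}(H')$, and since $H'$ satisfies the same hypotheses, iterating the claim $k-1$ times (formally, induction on the number of eliminated prefix variables) gives $S_{k:n}=\proj_{k:n}(\mathrm{sol}(H))=\mathrm{sol}(H^{(k-1)})$, where $H^{(k-1)}$ is precisely the set of constraints of $H$ involving only $x_k,\dots,x_n$, which is the claimed equality.

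For (i), deleting vertex $1$ from a path-closed potential graph preserves path-closure: any path between $i,j\neq 1$ that passes through vertex $1$ can be replaced by the direct edge $i\to j$, which path-closure supplies with weight $\delta_G(i,j)$; hence $\delta_{G\setminus\{1\}}(i,j)=\delta_G(i,j)$, every surviving edge still carries exactly the distance in the reduced graph, and no negative cycle can appear. GCD-tightness, the divisibility chain $d_2\mid\cdots\mid d_n$, the simple form, and the VBF property are all inherited because the constraints among $x_2,\dots,x_n$ are untouched.

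For (ii) --- the heart of the argument --- take $(x_2,\dots,x_n)$ a solution of $H'$. The constraints of $H$ mentioning $x_1$ are the congruence $x_1\in d_1\Zbb$, the lower bounds $x_1\ge x_j-c_{1j}$ from edges $1\to j$, and the upper bounds $x_1\le x_j+c_{j1}$ from edges $j\to 1$. Let $L$ be the maximum of the lower bounds and $U$ the minimum of the upper bounds. As in the DBM projection lemma, $L\le U$: for edges $1\to j$ and $j'\to 1$ with $j\neq j'$, path-closure gives an edge $j'\to j$ of weight $\delta_G(j',j)\le c_{j'1}+c_{1j}$ whose constraint $(x_2,\dots,x_n)$ satisfies, while for $j=j'$ the bound $0\le c_{1j}+c_{j1}$ is the nonnegativity of a $2$-cycle through vertex $1$. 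The new ingredient, and the only place harmonicity is used, is that $L$ is itself a multiple of $d_1$: if the maximum is attained at index $j^\ast$, then since $1$ is the smallest index $d_1\mid d_{j^\ast}$, so $\gcd(d_1,d_{j^\ast})=d_1$; GCD-tightness forces $d_1\mid c_{1j^\ast}$, and $x_{j^\ast}\in d_{j^\ast}\Zbb\subseteq d_1\Zbb$, whence $L=x_{j^\ast}-c_{1j^\ast}\in d_1\Zbb$. Thus $x_1:=L$ lies in $[L,U]\cap d_1\Zbb$ and extends $(x_2,\dots,x_n)$ to a solution of $H$; the degenerate cases where $x_1$ has no outgoing (resp.\ no incoming) edge are handled by taking $x_1$ a sufficiently small (resp.\ large) multiple of $d_1$.

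The main obstacle is exactly establishing $L\in d_1\Zbb$: this is where both the harmonic divisibility chain and GCD-tightness are indispensable, and it is precisely what fails for general SDBMs, as the GCD-tight, path-closed, yet unsatisfiable system \eqref{eq:sdbm-counter} demonstrates. The only other step requiring care is checking that vertex deletion preserves path-closure, so that the induction hypothesis genuinely reapplies; the rest is bookkeeping.
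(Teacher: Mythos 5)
Your proposal is correct and takes essentially the same route as the paper: extend/eliminate the smallest-index remaining variable, using path-closure to make the feasible interval for it non-empty and harmonicity plus GCD-tightness to place both interval endpoints in $d_1\Zbb$ so a congruent value exists. The only differences are organizational: you re-derive the interval non-emptiness ($L \le U$) directly from path-closure where the paper invokes the DBM projection lemma, and you add the (correct) observation that deleting vertex $1$ preserves path-closure and GCD-tightness so the induction can be run on the reduced system rather than within the original one.
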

\begin{proof}
Suppose $(p_{k+1}, \dots p_n) \in S_{k+1:n}$. We show that there exists a $p_k$ such that $(p_k, \dots p_n) \in S_{k:n}$. By substituting $p_{k+1}, \dots p_n$ into the system, we obtain bounds of the form $p_i - c_{ki} \le x_k \le p_i + c_{ik}$ for $k < i \le n$ on $x_k$ when the corresponding inequalities exist. If none of the lower bounds exist or none of the upper bounds exist, then we can definitely find a multiple of $d_k$ satisfying these bounds to assign to $x_k$.

Otherwise, if at least one upper bound and one lower bound is produced, then the set of $x_k$ satisfying the inequalities is $[\max_i(p_i - c_{ki}), \min_i(p_i + c_{ik})]$, which is of the form $[p_i - c_{ki}, p_j + c_{jk}]$ for some $i, j \in \{k+1, \dots n\}$. This interval is non-empty by the DBM projection lemma (\autoref{lem:dbm-proj}).
%
%

Now note that $p_i \in d_i \Zbb \subseteq d_k \Zbb$ by the harmonic property and similarly $p_j \in d_k \Zbb$. Also, $c_{ki} \in \gcd(d_k, d_i) \Zbb = d_k \Zbb$ by path-closure and the harmonic property; similarly,  $c_{jk} \in d_k \Zbb$. So both the endpoints lie in $d_k \Zbb$ and therefore it certainly contains a multiple of $d_k$. Repeating this, we can extend any point in $S_{k:n}$ into a point in $S_{1:n}$, a full solution to the whole system.
\end{proof}
\begin{corollary}
A path-closed GCD-tight HSDBM is satisfiable.
\end{corollary}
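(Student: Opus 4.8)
The plan is to derive the corollary directly from the suffix projection lemma (\autoref{lem:proj-hr-suffix}) by instantiating it at the shortest possible suffix, namely $\{x_n\}$ alone.

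First I would reduce to the case that matters: by the earlier discussion, satisfiability of an arbitrary HSDBM reduces to that of a \emph{simple} (hence VBF) HSDBM, and path-closure together with GCD-tightness are precisely the hypotheses under which \autoref{lem:proj-hr-suffix} applies. Note also that if the underlying DBM has a negative cycle it is not path-closed (a path-closed DBM is satisfiable by definition), so that degenerate situation cannot occur here.

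Next, apply \autoref{lem:proj-hr-suffix} with $k = n$. It says that $S_{n:n}$, the projection of the solution set onto $x_n$, equals the set of solutions of the constraints mentioning only $x_n$. In a simple VBF HSDBM the only such constraint is the congruence $x_n \in d_n\Zbb$ (a self-loop $-x_n + x_n \le c_{nn}$, if present, reads $0 \le c_{nn}$ with $c_{nn} = \delta_G(n,n) = 0$ by path-closure, hence is vacuous). This solution set contains $x_n = 0$, so it is nonempty.

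Finally, since $S_{n:n}$ is a projection of the full solution set, it is empty exactly when the full solution set is; as $S_{n:n} \neq \varnothing$, the full solution set is nonempty, i.e., the HSDBM is satisfiable. There is essentially no remaining obstacle: all the work lives in the lemma, whose extension step already shows how to lift a valid value of $x_n$ to a full solution. The only points needing care are the base case (constraints on $x_n$ alone are trivially satisfiable) and checking that the reduction to simple form preserves the hypotheses, both of which are routine.
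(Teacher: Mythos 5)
Your proposal is correct and follows the paper's own argument: the paper likewise instantiates the suffix projection lemma at $k=n$, observing that $S_{n:n} = d_n\Zbb \ne \varnothing$, so the system is satisfiable. The extra remarks about reducing to simple form and about self-loops are harmless elaborations of the same proof.
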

\begin{proof}
$S_{n:n} = d_n\Zbb \ne \varnothing$ is the projection of the solution set onto $x_n$.
\end{proof}

This forms the basis of the \textsc{SolveHSDBM} algorithm in \autoref{fig:sat} to decide the satisfiability of HSDBMs: first, obtain the path-closure of the inequalities by running the Floyd-Warshall algorithm~\cite{clrs} on the potential graph, then GCD-tighten all inequalities, and repeat these two steps until a fixpoint or contradiction is reached, at which point we know whether the system is satisfiable.
\begin{figure}[h!t]
\begin{minipage}[t]{.49\textwidth}
\small
\begin{algorithmic}[1]
\Function{SolveHSDBM}{$E, d$}
\State Path-close inequalities $E$
\While{$(E, d)$ not a fixpoint}
\State Set every $c_{ij}$ in $E$ to \\
\hfill $\lfloor c_{ij} \rfloor_{\gcd(d_i, d_j)}$
\If{negative cycles in $E$}
    \State \Return $\bot$
\EndIf
\State Compute all pairs of distances
\State Set every $c_{uv}$ to $\delta_E(u, v)$
\EndWhile
\State \Return \texttt{SAT}
\EndFunction
\end{algorithmic}
\end{minipage}
\hfill
\begin{minipage}[t]{.48\textwidth}
\small
\begin{algorithmic}[1]
\Function{SolveSDBM}{$E, d$}
    \If{no integral solution to $E$}
        \State \Return $\bot$
    \EndIf
    \State $p \gets $ an integral solution to $E$
    \State $\Dlcm \gets \lcm(d_1, \dots d_n)$
    \State $\ell \gets p - n\Dlcm$
    \State $u \gets p + n\Dlcm$
    \For{$i \in [n]$}
        \State $u_i \gets \lfloor u_i \rfloor_{d_i}$
    \EndFor
    \While{fixpoint not reached} \label{alg-line:fixpoint-loop-start}
        \For{$(i, j, c_{ij}) \in E$}
            \If{$u_j < \lfloor u_i + c_{ij} \rfloor_{d_j}$} \label{alg-line:if-ui}
                \State $u_j \gets \lfloor u_i + c_{ij} \rfloor_{d_j}$
                \If{$u_j < \ell_j$}
                    \State \Return $\bot$
                \EndIf
            \EndIf
        \EndFor
    \EndWhile \label{alg-line:fixpoint-loop-end}
    \State \Return $u$
\EndFunction
\end{algorithmic}
\end{minipage}
\label{fig:sat}
\caption{HSDBM and SDBM satisfiability.}
\end{figure}


\begin{lemma}
\label{lem:induced-fw}
Let $G = (V, E)$ be a transitively closed graph with no negative cycles, i.e.\ whenever there is a path from $u$ to $v$, there is an edge from $u$ to $v$ of weight $\delta_G(u, v)$. Let $U \subseteq V$. Now let $F$ be a copy of $E$ in which we have decreased the weights of some edges that go from one vertex in $U$ to another in $U$. Finally, let $H = (V, F)$.

Suppose that $H$ has no negative cycles. Then for any vertices $u$ and $v$ in $U$ with a path from $u$ to $v$, there is a shortest path from $u$ to $v$ that never leaves $U$.
\end{lemma}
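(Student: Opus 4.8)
The plan is to start from a minimum-weight $u$-$v$ walk in $H$ and repeatedly short-circuit any ``excursion'' that leaves $U$, using the fact that $G$ is transitively closed with shortest-path weights. Since there is a $u$-$v$ path and $H$ has no negative cycles, minimum-weight $u$-$v$ walks in $H$ exist; let $P$ be one with the fewest edges, so its weight is $\delta_H(u,v)$. I would claim that $P$ never leaves $U$; the lemma then follows immediately, since a shortest path staying in $U$ can be read off from $P$ (if $P$ is not simple, delete its zero-weight cycles, which keeps all remaining vertices in $U$).

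Suppose for contradiction that $P$ visits a vertex outside $U$. Because its endpoints $u,v$ lie in $U$, the off-$U$ vertices of $P$ occur in maximal runs, each flanked on both sides by vertices of $U$; pick one such run, yielding a subpath $w_1 = z_0 \to z_1 \to \dots \to z_k = w_2$ of $P$ with $w_1, w_2 \in U$, $k \ge 2$, and $z_1, \dots, z_{k-1} \notin U$. Every edge of this subpath has an endpoint outside $U$, hence is \emph{not} an edge from one vertex of $U$ to another, so by the definition of $F$ it carries the same weight in $F$ as in $E$. Therefore the $H$-weight of the subpath equals its $E$-weight, which is at least $\delta_G(w_1,w_2)$. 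At the same time, this subpath is a $w_1$-$w_2$ path in $G$, so by transitive closure of $G$ there is an edge $(w_1, w_2, \delta_G(w_1,w_2)) \in E$; since $w_1, w_2 \in U$ and $F$ only decreases weights of such edges, this edge has $F$-weight at most $\delta_G(w_1,w_2)$. Replacing the segment $z_0 \to \dots \to z_k$ of $P$ by this single edge thus produces a $u$-$v$ walk in $H$ of weight at most $\delta_H(u,v)$ — hence exactly $\delta_H(u,v)$ — but with strictly fewer edges than $P$, contradicting the choice of $P$. So $P$ stays in $U$.

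I do not expect a genuine obstacle; the whole content is the observation that the edits turning $G$ into $H$ touch only edges internal to $U$ and only lower them, so any detour out of $U$ on a shortest walk can be replaced by the chord inside $U$ that was already present in the transitively closed $G$. The only points requiring care are the bookkeeping ones: verifying that each edge of an excursion genuinely keeps its original $E$-weight in $F$ (immediate from how $F$ is built), that the chord edge $(w_1,w_2)$ exists in $E$ with weight $\delta_G(w_1,w_2)$ (transitive closure), and that the single-edge replacement strictly decreases the edge count so that the minimal-length choice of $P$ forces a contradiction.
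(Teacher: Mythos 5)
Your proof is correct and follows essentially the same route as the paper: the key step in both is that an excursion leaving $U$ uses only edges whose weights are unchanged from $E$, so its weight is at least $\delta_G(w_1,w_2)$, which in turn is at least the $H$-weight of the direct chord inside $U$ guaranteed by transitive closure. The only cosmetic difference is that you package the replacement as an extremal (fewest-edges, minimum-weight walk) contradiction, while the paper simply replaces every excursion of a shortest path by the corresponding chord; both are sound.
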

\begin{proof}
Let $p = (p_1, \dots p_k)$ be the vertices of a path starting and ending in $U$ and with all the intermediate vertices lying outside $U$. Let $W$ be the weight of $p$ in $H$ and let $c$ be the weight of the edge from $p_1$ to $p_k$ in $H$. Then $W \ge \delta_G(p_1, p_k)$ because only edges that stay within $U$ decreased, and $\delta_G(p_1, p_k) \ge c$ because the edge in $G$ had weight equal to $\delta_G(p_1, p_k)$ and it can only have decreased or stayed the same in $H$. Thus the path $p$ cannot have weight less than the weight of the direct edge in $H$.

For a general path that goes in and out of $U$ repeatedly, we can always replace all sections of the path that go outside and come back in with the direct edges that stay in $U$, to obtain a path within $U$ whose weight is at most that of the original path. Thus for any start and end point in $U$, the shortest path that stays in $U$ has weight equal to the shortest path in the entire graph $H$.
\end{proof}

\begin{theorem}
\label{thm:hsdbm-runtime}
\textsc{SolveHSDBM} in \autoref{fig:sat} terminates in $O(n^4)$ time.
\end{theorem}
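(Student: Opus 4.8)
The plan is to bound the running time as (cost of the initial path-closure) $+$ (number of while-loop iterations) $\times$ (cost per iteration). Each iteration performs one $\Ocal(n^2)$ GCD-tightening pass over the at most $n^2$ edges, one negative-cycle test, and one all-pairs shortest-path recomputation, each of which is $\Ocal(n^3)$ (the negative-cycle test can even be folded into Floyd–Warshall, since a negative cycle exists iff some $\delta_E(v,v)<0$); the initial path-closure is likewise $\Ocal(n^3)$. Hence it suffices to prove that the while loop runs at most $n$ times.

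To bound the iteration count I would order the variables so that $d_1 \mid d_2 \mid \dots \mid d_n$ and prove, by induction on $k$, the invariant: \emph{after the $k$-th iteration of the while loop, no later GCD-tightening step ever again modifies an edge $c_{ij}$ with $\min(i,j)\le k$}. For the base case $k=1$, note that after the initial path-closure and the first GCD-tightening every edge $c_{ij}$ is a multiple of $\gcd(d_i,d_j)=d_{\min(i,j)}$, hence, by the harmonic property, a multiple of $d_1$; the subsequent path-closure replaces each $c_{ij}$ by $\delta_E(i,j)$, a sum of such edge weights, and so preserves divisibility by $d_1$. The same reasoning applies verbatim to every later iteration, so all edges remain multiples of $d_1$ forever; in particular the edges with $\min(i,j)=1$, whose tightening divisor is exactly $d_1$, are never touched by GCD-tightening after the first iteration.

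For the inductive step, assume the invariant for $k$. Then from iteration $k+1$ onward GCD-tightening only decreases edges inside $U=\{k+1,\dots,n\}$. Let $G$ be the (transitively closed, negative-cycle-free) graph at the end of iteration $k$ and let $H$ be the graph after the GCD-tightening in iteration $k+1$; by the invariant for $k$, $H$ is obtained from $G$ by decreasing only $U$-edges, so if $H$ has a negative cycle the algorithm returns $\bot$ and terminates, and otherwise \autoref{lem:induced-fw} applies with this $U$, giving $\delta_H(u,v)=\delta_{H[U]}(u,v)$ for all $u,v\in U$. After that GCD-tightening every $U$-edge $c_{ab}$ is a multiple of $d_{\min(a,b)}$, which, since $\min(a,b)\ge k+1$ and the divisors are harmonic, is a multiple of $d_{k+1}$; therefore $\delta_H(u,v)$, a sum of such weights along a path inside $U$, is again a multiple of $d_{k+1}$. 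So after iteration $k+1$'s path-closure every edge $c_{ij}$ with $\min(i,j)=k+1$ is a multiple of its tightening divisor $d_{k+1}$, and applying the same argument (and \autoref{lem:induced-fw}) in each later iteration shows this persists; combined with the invariant for $k$ on the finer edges, this yields the invariant for $k+1$.

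Instantiating the invariant with $k=n-1$ (every pair of distinct indices has $\min(i,j)\le n-1$) shows that once iteration $n-1$ has completed, no GCD-tightening can change any edge, so the graph is simultaneously GCD-tight and path-closed, i.e.\ a fixpoint, and the loop exits by iteration $n$. Thus the loop runs at most $n$ times and the total cost is $\Ocal(n^4)$. I expect the inductive step to be the main obstacle: the delicate point is that the full-graph shortest-path recomputation could route a distance between two coarse (large-stride) vertices through the already-settled fine-stride vertices and so reintroduce a GCD defect; \autoref{lem:induced-fw} is exactly what excludes this, because once a level is settled the GCD-tightening no longer perturbs its edges, so the relevant distances can be taken to stay within $U$.
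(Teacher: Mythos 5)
Your proof is correct and follows essentially the same route as the paper: the same induction on the harmonic stride levels (showing edges within each suffix $\{v_{k},\dots,v_n\}$ stay multiples of $d_{k}$ after the $k$-th tightening), invoking \autoref{lem:induced-fw} at exactly the same point to rule out shortest paths detouring through already-settled fine-stride vertices, and concluding at most $O(n)$ iterations of cost $O(n^3)$ each. No gaps to report.
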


\begin{proof}
We will view the algorithm as operating on the potential graph; all modifications to $c_{ij}$ then become modifications to the edge weights. We will show that at most $n - 1$ repetitions are needed for fixpoint. We prove that after the $i$th application of GCD tightening, all edges between vertices in $\{v_i, \dots v_n\}$ will stay multiples of $d_i$ for the rest of the algorithm. We prove this by induction. The base case for $i = 1$ is true since when all edges are multiples of $d_1$, path-closure cannot change this divisibility, and GCD-tightening will not change this either.

Now assume it to be true for $i$; we will show it for $i + 1$. The $i + 1$-th application of GCD tightening only decreases edge weights between vertices in $U = \{v_{i+1}, \dots v_n\}$, by the induction hypothesis. Now we want to analyze how path closure affects the edge weights in the subgraph induced by $U$.
After tightening, all edges in the subgraph are multiples of $d_{i+1}$, so distances between nodes in the subgraph are also multiples of $d_{i+1}$ by 
\autoref{lem:induced-fw}.
Thus path closure does not affect divisibility at this step. Therefore, subsequent GCD-tightening does not affect it either. Repeated applications of these preserve the property.

Thus the $n$th application of GCD tightening does nothing since there are no edges in the graph induced on the single vertex $v_n$ for $i = n$. Therefore, neither does the subsequent application of path-closure. Thus, fixpoint is achieved after $n - 1$ runs of GCD-tightening and path-closure.
\end{proof}

\subsection{Satisfiability for SDBMs in $O(n m \Dlcm)$ time}
\label{sec:sdbm-sat}

Extending work by Lagarias~\cite{lagarias1985simultaneous}, it can be shown that the SDBM satisfiability problem is \NP-hard (\autoref{thm:sdbm-sat-np-hard}), so no polynomial-time algorithm is likely to exist. In program analysis applications, the inequality coefficients can be large, so we would like an algorithm that runs in polynomial time in the representation size of these coefficients. On the other hand, in these applications, the congruence divisors are typically small, so we are willing to let the algorithm be polynomial in the \emph{values} of these, i.e., pseudo-polynomial in these. In fact, these divisors typically share many common factors, so that their LCM is not much bigger than the divisors. We present an algorithm that is pseudo-linear in the LCM.

The intuition for the algorithm comes from the following extensions of our inference rules to upper bounds $x_i \le u_i$ on the variables.
\begin{align*}
    x_i \le u_i &\implies x_i \le \lfloor u_i \rfloor_{d_i} \\
    x_i \le u_i \land -x_i + x_j \le c_{ij} &\implies x_j \le u_i + c_{ij}
\end{align*}

Suppose we have an SDBM with all variables bounds present and we keep applying these rules. Then we either obtain a contradiction $u_i < \ell_i$, or a fixpoint. At the fixpoint it holds that $u_i \in d_i \Zbb$ and moreover $u_j \le u_i + c_{ij}$. So in fact, $u$ becomes a solution to the SDBM. Each successful application of an inference rule reduces the gap $u_i - \ell_i$ between some upper bound and lower bound. If this difference becomes negative, a contradiction is obtained and the algorithm halts.

So the worst-case runtime of this method depends on the sum of the gaps $u_i - \ell_i$ between the upper bounds and the lower bounds, which could naively be exponential in the representation size of the constraint system. To avoid this worst-case scenario, we reduce the satisfiability of SDBMs to the satisfiability of SDBMs with variable bounds where the gap between the upper and lower bound is at most $2n\Dlcm$ for each variable.

For a matrix $A$, let $\MASD(A)$ be the maximum absolute determinant among all square submatrices of $A$. A standard fact~\cite{cgst1986sensitivity} in the theory of integer programming is that if $P \subseteq \Rbb^n$ is a polyhedron, $P \cap \Zbb^n$ is non-empty, and $x$ is in $P$, then there exists a point $y$ in $P \cap \Zbb^n$ such that $\|x - y\| \le n \MASD(A)$.
We slightly generalize this to obtain the following lemma.

\begin{restatable}{lemma}{lemSdbmSolInnD}
\label{lem:sdbm-sol-in-nD}
Let $S = \{x \in \Rbb^n \mid Ax \le b\}$ be a non-empty polyhedron. Let $L$ be the set of solutions to some single-variable congruence constraints such that $S \cap L \ne \varnothing$, and let $\Dlcm$ be the LCM of the congruence divisors of $L$. Let $y \in S$. Then there exists a solution $x \in S \cap L$ such that $\|x - y\|_\infty \le n\Dlcm \MASD(A)$.
\end{restatable}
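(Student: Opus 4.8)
The plan is to reprove the cited standard proximity fact in the generality we need rather than to invoke it as a black box. A black-box reduction --- substituting $x_i = r_i + d_i t_i$ to convert the congruences into integrality of the $t_i$ --- replaces the constraint matrix $A$ by $AD$ with $D = \mathrm{diag}(d_1,\dots,d_n)$; the subdeterminants of $AD$ pick up products of the $d_i$ and so can be exponentially larger than those of $A$, and a further factor $\max_i d_i$ is lost on translating the solution back, making the resulting bound far too weak. Instead I would imitate the proof of the cited fact and pay the factor $\Dlcm$ only once. Write $\Lambda := D\Zbb^n$ (with $d_i = 1$ for any variable on which $L$ imposes no congruence); then $\Lambda$ is a full-rank sublattice of $\Zbb^n$ and $L = x_0 + \Lambda$ for any $x_0 \in S \cap L$, which exists by hypothesis.

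Recall the skeleton of the proof of the cited fact. Pick $z \in S \cap L$ minimizing $\|z - y\|_1$ (such a minimizer exists, since $S\cap L$ is closed and $\|z-y\|_1\to\infty$ on it) and assume $z \ne y$. The displacement $z - y$ lies in a cone $C = \{v : Nv \le 0\}$ of feasible directions of $S$ cut out by certain active constraints (rows $N$ of $A$), and a conformal Carath\'eodory/Steinitz decomposition writes $z - y = \sum_{k=1}^{t}\mu_k h_k$ with $t \le n$, each $\mu_k > 0$, each $h_k$ a primitive integer vector on an extreme ray of $C$, and all the $h_k$ sign-compatible with $z - y$. The entries of each $h_k$ are, up to a common divisor, $(n-1)\times(n-1)$ subdeterminants of $A$, so $\|h_k\|_\infty \le \MASD(A)$.

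The single new ingredient is that a step from $z$ must remain in the coset $L$, i.e.\ must be a vector of $\Lambda$. So I replace $h_k$ by its least positive integer multiple $g_k := \alpha_k h_k \in \Lambda$, where $\alpha_k = \lcm_i\bigl(d_i/\gcd(d_i,(h_k)_i)\bigr)$ (the $i$-th term read as $1$ when $(h_k)_i = 0$). Each of these terms divides $d_i$ and hence divides $\Dlcm$, so $\alpha_k \mid \Dlcm$ and therefore $\|g_k\|_\infty = \alpha_k\|h_k\|_\infty \le \Dlcm\,\MASD(A)$. Rewriting the decomposition as $z - y = \sum_k \lambda_k g_k$ with $\lambda_k = \mu_k/\alpha_k > 0$, I rerun the exchange step: if some $\lambda_k \ge 1$, then $z - g_k$ still lies in $L$ (as $g_k \in \Lambda$), still lies in $S$ (by conformality, exactly as in the cited proof --- subtracting $g_k$ only moves each coordinate of $z$ monotonically back toward $y$ without violating a constraint), and is strictly closer to $y$, contradicting the minimality of $z$. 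Hence every $\lambda_k < 1$, so $\|z - y\|_\infty \le \sum_{k=1}^{t}\lambda_k\|g_k\|_\infty < t\,\Dlcm\,\MASD(A) \le n\,\Dlcm\,\MASD(A)$, and $x := z$ is the required solution.

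I expect the main obstacle to be justifying that the exchanged point $z - g_k$ stays feasible in $S$: the cone $C$ only certifies infinitesimal feasible moves, and it is precisely the conformal (sign-compatible) structure of the decomposition --- the technical heart of the cited fact --- that makes a full unit step along a single $g_k$ safe, so I would need to verify carefully that this argument goes through with $\Lambda$-vectors substituted for integer vectors. The rest is routine: the Carath\'eodory bound $t \le n$, the subdeterminant bound on $\|h_k\|_\infty$, and the divisibility estimate $\alpha_k \mid \Dlcm$ --- the last being the only point at which the congruence data enters, and exactly what turns the classical $n\,\MASD(A)$ into $n\,\Dlcm\,\MASD(A)$. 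A coordinate on which $L$ imposes no congruence is free and can be set equal to $y_i$, dropping out of the estimate; this minor case I would handle at the outset.
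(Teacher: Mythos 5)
Your overall strategy is the paper's: do not black-box the proximity theorem, but rerun its proof so that the congruence lattice enters only once. The paper's own proof takes an \emph{arbitrary} $z \in S \cap L$, splits the rows of $A$ into $A_1$ (where $A_1 z \ge A_1 y$) and $A_2$ (the rest), writes $z - y = \sum_i \lambda_i v_i$ over integral generators $v_i$ of the cone $\{v : A_1 v \ge 0,\ A_2 v \le 0\}$ with $\|v_i\|_\infty \le \MASD(A)$, and then simply rounds each coefficient down to $\nu_i = \lfloor \lambda_i \rfloor_{\Dlcm}$. The point $x_\nu = z - \sum_i \nu_i v_i = y + \sum_i (\lambda_i - \nu_i) v_i$ stays in $S$ (the first representation handles $A_1$, the second handles $A_2$), stays in $L$ because one subtracts $\Dlcm$-multiples of integer vectors from $z$, and is within $n \Dlcm \MASD(A)$ of $y$ by Carath\'eodory. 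No minimizer and no exchange step are needed. You instead pick an $\ell_1$-closest $z \in S \cap L$, scale each generator into the lattice $\Lambda = D\Zbb^n$ by a factor $\alpha_k \mid \Dlcm$, and argue by exchange that all coefficients are below $1$. The scaling step, the divisibility estimate $\alpha_k \mid \Dlcm$, the feasibility computation for $z - g_k$, and the final Carath\'eodory count are all correct and parallel the paper's.

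The soft spot is the conformality claim. You attribute a coordinate-wise sign-compatible (``conformal'') decomposition to the cited fact, but the cone used there is cut out by rows of $A$, not by coordinate signs, and its extreme rays need not be sign-compatible with $z - y$. You also slightly misplace where conformality matters: feasibility of $z - g_k$ does \emph{not} need it --- it follows from the cone inequalities alone via the same two-representation computation the paper uses for its interpolants $x_\mu$ (namely $A_1(z - g_k) \le A_1 z \le b_1$ and $A_2(z-g_k) = A_2\bigl(y + (\lambda_k - 1)g_k + \sum_{j \ne k}\lambda_j g_j\bigr) \le A_2 y \le b_2$). What genuinely needs sign-compatibility is the strict $\ell_1$-decrease $\|(z - g_k) - y\|_1 < \|z - y\|_1$; without it, subtracting $g_k$ can move some coordinates away from $y$ and the contradiction with minimality of $z$ evaporates. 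The fix is standard: intersect the cone with the orthant containing $z - y$ before taking generators (appending rows $\pm e_i$ does not increase $\MASD$, since Laplace expansion along a unit row reduces any subdeterminant to a smaller subdeterminant of $A$), after which your argument closes. Alternatively, note that once the generators are in hand you can drop the minimizer and the exchange entirely and just round the coefficients $\lambda_k$ down (to integers, or to multiples of $\Dlcm$ with unscaled generators), which is exactly the paper's shorter finish and needs no conformality at all.
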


Moreover, we show that $\MASD(A) = 1$ for DBMs, making the bound $n\Dlcm$.


\begin{restatable}{lemma}{lemMASD}
Let $A$ be an $m \times n$ matrix where each row has exactly one $+1$ and one $-1$. Then $\MASD(A) = 1$.
\end{restatable}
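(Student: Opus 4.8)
The plan is to show that $A$ is totally unimodular (TU), which immediately gives $\MASD(A) = 1$ since every square submatrix then has determinant in $\{-1, 0, +1\}$, and the $1\times 1$ submatrices coming from a $\pm 1$ entry realize the value $1$. A matrix with one $+1$ and one $-1$ per row is precisely the (transposed) incidence matrix of a directed graph: each row corresponds to an edge $(i,j)$, placing a $+1$ in column $j$ and a $-1$ in column $i$. The total unimodularity of directed-graph incidence matrices is classical, but I would give a short self-contained argument.

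First I would reduce to showing $\det B \in \{-1, 0, +1\}$ for every square submatrix $B$, proceeding by induction on the size $k$ of $B$. The base case $k = 1$ is clear since every entry is in $\{-1, 0, +1\}$. For the inductive step, I would do a case analysis on the columns of $B$: (i) if some column of $B$ is all zeros, then $\det B = 0$; (ii) if some column of $B$ has exactly one nonzero entry, expand the determinant along that column, reducing to a $(k-1)\times(k-1)$ submatrix of the same type, so the inductive hypothesis applies (up to the sign $\pm 1$ of the pivot entry); (iii) otherwise every column of $B$ contains either zero or at least two nonzero entries. In case (iii), since each row of $B$ has at most one $+1$ and at most one $-1$ (it is a sub-row of a row of $A$), summing all the rows of $B$ gives the zero vector in every column that has both a $+1$ and a $-1$; and every column with $\ge 2$ nonzero entries, all of which are in $\{+1,-1\}$ taken from a single original $\pm 1$ pair per row, in fact consists of exactly one $+1$ and one $-1$ once we note a column can receive at most one $+1$ from any given row. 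So the rows of $B$ sum to zero, $B$ is singular, and $\det B = 0$.

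The main obstacle is getting case (iii) airtight: I need the observation that in submatrix $B$, a column can contain at most one $+1$ and at most one $-1$ contributed from the rows — this is not automatic from ``one $+1$ and one $-1$ per row of $A$'' and requires care, because restricting to a column subset of $A$ and then to a row subset could in principle leave a column with, say, two $+1$ entries if two different rows of $A$ both had their $+1$ in that column. The resolution is that this is fine: such a column still has its nonzero entries partitioned, but the row-sum argument must instead be replaced by the standard argument that partitions the rows into two groups whose signed sum telescopes. Concretely, in case (iii) I would invoke the classical criterion (Ghouila-Houri / Heller-Tompkins): a $0,\pm1$ matrix with at most one $+1$ and one $-1$ per row is TU because its columns can be split into two classes so that in each row the column sums differ by at most one in absolute value — here the split is trivial (all columns in one class), and each row sum is then $0$ or $\pm 1$. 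Rather than redevelop this, I would simply cite that a matrix in which each row has at most one $+1$ and one $-1$ is the incidence matrix of a directed multigraph and is therefore totally unimodular (e.g., Schrijver, \emph{Theory of Linear and Integer Programming}), and conclude $\MASD(A)=1$; the inductive sketch above serves as the proof for a reader who wants it spelled out. Combining this with Lemma~\ref{lem:sdbm-sol-in-nD} yields the displacement bound $n\Dlcm$ used by \textsc{SolveSDBM}.
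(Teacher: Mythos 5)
Your proposal is correct, but it takes a genuinely different route from the paper. You reduce the statement to total unimodularity of (transposed) directed-graph incidence matrices and discharge the key case by citation (Schrijver, or the Ghouila--Houri criterion applied with the trivial partition, which is a valid application since each row has at most one $+1$ and at most one $-1$, so any row sum over a column subset lies in $\{-1,0,1\}$). The paper instead gives a short self-contained argument in the same graph language but organized row-wise: a row (edge) with neither endpoint chosen forces $\det B = 0$; a row with exactly one endpoint chosen is removed by Laplace expansion along that row together with its single present column; and in the remaining case the $k$ chosen edges live on $k$ chosen vertices, so the undirected graph contains a cycle, and the signed sum of the cycle's rows vanishes, giving $\det B = 0$. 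What your approach buys is brevity and a pointer to standard theory; what the paper's buys is a two-paragraph proof with no external machinery. One remark on your attempted self-contained induction: the obstacle you flag in case (iii) dissolves without Ghouila--Houri. If no column of $B$ is zero or has a single nonzero, then every column of the $k \times k$ matrix $B$ has at least two nonzeros, so $B$ has at least $2k$ nonzeros; since each row has at most two, every row of $B$ must contain both its $+1$ and its $-1$. Hence every \emph{row} of $B$ sums to zero, i.e.\ $B\mathbf{1} = 0$, and $B$ is singular. (Your phrasing ``summing all the rows of $B$ gives the zero vector'' has the directions swapped: columns of $B$ need not be sign-balanced, and indeed a column may contain two $+1$'s; it is the row sums that vanish.) With that correction your induction is complete and elementary, and together with the $1\times 1$ submatrices giving the value $1$, it yields $\MASD(A) = 1$ exactly as the paper does.
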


We defer the proofs to \appref{app:lemmas}. These lemmas allow to solve an SDBM by first finding any integral solution $p$ to the inequalities and adding constant bounds on the variables to lie within a box of side length $2n\Dlcm$ centered on $p$, then applying the above inference rules until reaching a contradiction or fixpoint.

In \textsc{SolveSDBM} in \autoref{fig:sat}, we first GCD-tighten all the upper bounds and then look for opportunities to apply the path-closure inference rule to the upper bounds, by checking each difference bound. Whenever an upper bound decreases due to path-closure, we immediately apply the GCD-tightening rule to it. It takes $O(m)$ time to look over all edges. Since each variable's upper and lower bounds differ by $2n\Dlcm$, there can be at most $2n^2\Dlcm$ steps of such tightening, for an overall runtime of $\Ocal(n^2m\Dlcm)$.

We have to process the edge $(i, j, c_{ij})$ once in the beginning. After that, we only have to process it again when the RHS of the if-condition on line \ref{alg-line:if-ui} changes, i.e., only when $u_i$ decreases. So we can replace lines \ref{alg-line:fixpoint-loop-start}-\ref{alg-line:fixpoint-loop-end} with the following.

{\small%
\begin{algorithmic}[1]
    \State $\mathtt{dirty} \gets [n]$
    \While{$\mathtt{dirty} \ne \varnothing$}
    \State Pick $i \in \mathtt{dirty}$
    \State Remove $i$ from $\mathtt{dirty}$
    \For{$(i, j, c_{ij}) \in E$}
        \If{$u_j < \lfloor u_i + c_{ij} \rfloor_{d_j}$}
            \State $u_j \gets \lfloor u_i + c_{ij} \rfloor_{d_j}$
            \If{$u_j < \ell_j$}
                \State \Return $\bot$
            \EndIf
            \State Add $j$ to $\mathtt{dirty}$
        \EndIf
    \EndFor
    \EndWhile
\end{algorithmic}}
\noindent
Here $u_i$ can decrease at most $2n\Dlcm$ times since after that it will go below the lower bound $\ell_i$ and produce a contradiction. Each time $u_i$ decreases, we check all edges that go out from $i$, as these are the edges that might use the reduced value of $u_i$. Thus if $o_i$ is the number of edges leaving $i$, then the time complexity of this more careful implementation is $\sum_i \Ocal(n\Dlcm o_i) = \Ocal(n\Dlcm m)$ since $\sum_i o_i = m$. 



\section{HSDBM Normalization}

We consider normalization for satisfiable systems; if a system is unsatisfiable we  normalize it by setting it to some canonical unsatisfiable system. We first normalize the inequalities and then the congruence constraints.
\begin{definition}
\label{def:ineq-norm}
An \emph{inequality-normalized (H)SDBM} is one where for any bound $-x_i + x_j \le c$, if it holds in the solution set that $-x_i + x_j \le d$, then $c \le d$, i.e. the bound in the system is the tightest valid bound.
\end{definition}
Note that any two SDBMs with the same solution sets will have the same normalized inequalities, since this depends only on the solution set and not on the form of the initial constraint system. The above definition is equivalent to saying that every bound $-x_i + x_j \le c$ has a solution that makes it tight, and whenever a bound does not exist that expression can take arbitrarily large values in the solution set. Also, an inequality-normalized system is always path-closed and GCD-tight since no such tightening inference rules can decrease any bound.

We previously showed that path-closure and GCD-tightening are not sufficient to check satisfiability of SDBMs. Thus, we do not expect these to be sufficient for inequality normalization either. One might hope that it is enough for HSDBMs, but in fact it is not the case either. Consider the following example.
\begin{align*}
    -1 \le x - y \le 1 &&
    -1 \le x - w \le 0 &&
    0 \le x - z \le 1 &&
    x, y \in \Zbb \\
    0 \le w - z \le 2 &&
    -1 \le y - w \le 0 &&
    0 \le y - z \le 1 &&
    z, w \in 2\Zbb
\end{align*}
It is obviously GCD-tight and path-closed. But all constraints are not as tight as possible.
%
Note that $w$ is either $z$ or $z + 2$. If $w = z$ then $x - z = x - w = 0$, and similarly $y - z = 0$, implying $x - y = 0$. Otherwise $w = z + 2$, then $x - w = x - z = 1$ and $y - z = 1$, so $x - y = 0$ again, yielding tighter inequalities $0 \le x - y \le 0$. Therefore, we need to do more for inequality normalization.



First, let us consider variable-bound-free systems. Suppose the system has an inequality $-x_i + x_j \le c$ and we want to check if replacing it by $-x_i + x_j \le b$ for $b < c$ excludes any solutions. This is equivalent to asking if there are any solutions with $b + 1 \le -x_i + x_j \le c$, which is a single satisfiability check. We can thus binary search over the values of $b$ to find the minimum valid one, to obtain the tightest form of the inequality. We now establish a bound on the range of such $b$ values over which we have to search.

By the projection lemma (\autoref{lem:dbm-proj}), path-closing the underlying DBM brings it to normal form. Therefore every difference bound $-x_i + x_j \le c_{ij}$ has an integral point $y$ satisfying all the inequalities and such that $-y_i + y_j = c_{ij}$. By \autoref{lem:sdbm-sol-in-nD}, there exists a solution $z$ to the whole system with $z_j \ge c_{ij} - n\Dlcm$ and $z_i \le c_{ij} + n\Dlcm$, so that $-z_i + z_j \ge c_{ij} - 2n\Dlcm$.
Therefore, the tightest version of the inequality has a bound that is tighter by at most $2n\Dlcm$. The binary search then takes at most $3 + \log(n\Dlcm)$ steps. Inequality normalization thus takes at most $m(3 + \log(n\Dlcm))$ emptiness checks.

Now consider systems with variable bounds, still with remainder zero congruence constraints. Path-closure and GCD-tightening are sufficient to normalize these, by converting them to VBF form and applying the following lemma.
\begin{lemma}
If an HSDBM is path-closed and GCD-tight then all inequalities involving $x_n$ are tight. If a bound on $-x_i + x_n$ is missing then $-x_i + x_n$ is unbounded in that direction, and similarly for bounds on $-x_n + x_i$.
\end{lemma}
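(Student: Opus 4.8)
Since the lemma is applied to the variable-bound-free form of the system, I would work with a simple VBF HSDBM: the congruences are $x_k \in d_k\Zbb$ with $d_1 \mid \dots \mid d_n$, so $\Dlcm = d_n$ and every $d_k \mid d_n$. Being path-closed and GCD-tight, the system is satisfiable by the corollary to \autoref{lem:proj-hr-suffix}. I would first dispatch the ``missing bound'' statements, which are the easy part. If there is no bound on $-x_i + x_n$, then by path-closedness there is no path from $i$ to $n$ in the potential graph. Let $R$ be the set of vertices from which $n$ is reachable; then $n \in R$, $i \notin R$, and there is no edge from $V\setminus R$ into $R$ (such an edge would put its tail in $R$). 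Given any solution $x$ and any $L \in d_n\Zbb$ with $L \ge 0$, add $L$ to $x_v$ for every $v \in R$ and leave the rest fixed. This is again a solution: congruences are preserved since $d_v \mid d_n$ for all $v$; difference bounds internal to $R$ or to its complement are unchanged; edges leaving $R$ only get slacker; and there are no edges entering $R$ from outside. Since $-x_i + x_n$ increases by $L$, it is unbounded above. A missing bound on $-x_n + x_i$ is handled symmetrically, using the set of vertices from which $i$ is reachable (which excludes $n$).

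For the ``tight bound'' statement, suppose the bound $-x_i + x_n \le c_{in}$ is present. GCD-tightness gives $d_i = \gcd(d_i,d_n) \mid c_{in}$, so $-c_{in} \in d_i\Zbb$; and by \autoref{lem:sdbm-equiv} it suffices to produce a single solution with $x_n = 0$ and $x_i = -c_{in}$, since then $-x_i + x_n = c_{in}$. I would build such a solution by running the suffix-extension procedure from the proof of \autoref{lem:proj-hr-suffix} with greedy (smallest) choices: set $x_n := 0$; for $j = n-1, n-2, \dots, i+1$, if there is any edge from $j$ to some $l > j$ set $x_j := \max\{\,x_l - c_{jl} : l > j,\ (j,l,c_{jl}) \in E\,\}$ --- a multiple of $d_j$ (as $d_j \mid d_l$ and $d_j \mid c_{jl}$) and the least value consistent with the already-chosen $x_{j+1},\dots,x_n$, hence a legal choice by the argument proving \autoref{lem:proj-hr-suffix} --- and otherwise set $x_j$ to an arbitrarily negative multiple of $d_j$; then pin $x_i := -c_{in}$; finally extend to $x_{i-1},\dots,x_1$ by arbitrary legal choices, which exist by \autoref{lem:proj-hr-suffix}.

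The heart of the argument, and the step I expect to be most delicate, is verifying that pinning $x_i := -c_{in}$ is consistent with every constraint linking $x_i$ to $x_{i+1},\dots,x_n$. I would first establish, by downward induction on $j$, the invariant that $x_j \le c_{ij} - c_{in}$ for every $j \in \{i+1,\dots,n\}$ carrying an edge $i \to j$ (reading $x_n = 0$ and $c_{in} - c_{in} = 0$ when $j = n$): in the inductive step each edge $j \to l$ used in the greedy max composes with the edge $i \to j$ to yield, by path-closedness, an edge $i \to l$ with $c_{il} = \delta_G(i,l) \le \delta_G(i,j) + \delta_G(j,l) = c_{ij} + c_{jl}$, so $x_l - c_{jl} \le (c_{il} - c_{in}) - c_{jl} \le c_{ij} - c_{in}$. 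This invariant immediately yields all lower-bound constraints $x_i \ge x_l - c_{il}$. For an upper-bound constraint $x_i \le x_l + c_{li}$: the edge $l \to i$ composes with $i \to n$ to give an edge $l \to n$, so for $l < n$ that edge was available as a greedy lower bound when $x_l$ was chosen, whence $x_l \ge -c_{ln} \ge -c_{in} - c_{li}$ using $\delta_G(l,n) \le \delta_G(l,i) + \delta_G(i,n)$; and for $l = n$ the constraint reduces to $c_{in} + c_{ni} \ge 0$, true since the potential graph has no negative cycles. Hence $-c_{in}$ lies in the admissible interval for $x_i$, the construction succeeds, and the bound is tight. Finally, tightness of a present bound on $-x_n + x_i$, and unboundedness of a missing one, follow by applying the arguments above to the negated system $x \mapsto -x$, which reverses every edge while keeping its weight and so is again a path-closed, GCD-tight, variable-bound-free HSDBM with the same divisors (so $x_n$ is still the last variable), and in which the two orientations are swapped.
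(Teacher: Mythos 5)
Your proposal is correct, but it follows a genuinely different (and heavier) route than the paper for the tightness claim. The paper tightens all bounds of one orientation at once with a single witness: by path-closure the distance-from-$x_n$ assignment $x_n = 0$, $x_i = c_{ni}$ satisfies all inequalities (cf.\ \autoref{lem:dbm-vert-dist-soln}), and GCD-tightness with the harmonic property ($\gcd(d_n,d_i) = d_i \mid c_{ni}$) makes it satisfy the congruences; the missing-bound case is folded into the same construction by splitting the variables into the set $R$ that has a bound against $x_n$ and its complement $\overline{R}$, noting via path-closure that no edge goes from $R$ to $\overline{R}$, and shifting a solution of the $\overline{R}$-subsystem upward by a large multiple of its divisors, which simultaneously completes the witness and shows the missing expressions are unbounded; the other orientation is then obtained by negating all variables, exactly as you do. You instead argue bound by bound: a reachability-based shift for each missing bound (sound, and essentially the same mechanism as the paper's $\overline{R}$ shift), and for each present bound a greedy suffix extension pinned at $x_i = -c_{in}$, justified by an explicit invariant and completed through \autoref{lem:proj-hr-suffix}. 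Your verification goes through --- including the implicitly used facts that any $l$ with an edge $l \to i$ also has, by path-closure, an edge $l \to n$ and hence is assigned by the greedy maximum, and that the ``arbitrarily negative'' choices must simply be taken negative enough to respect your invariant --- but the key simplification you missed is that path-closure already provides a global witness, the distance vector from $x_n$, which makes every bound $c_{ni}$ tight simultaneously, so no per-bound construction or invariant is needed. What your version buys is a self-contained per-inequality argument that does not require assembling all bounds of one orientation into a single assignment; what the paper's buys is brevity and a reusable witness point.
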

\begin{proof}
First, we show it for bounds of the form $-x_n + x_i \le c_{ni}$. Suppose all such bounds exist. Then the point with $x_n = 0$ and $x_i = c_{ni}$ for $i < n$ is a solution. By GCD-tightening and the harmonic property, it satisfies the congruences. By path-closure, we have $c_{nj} \le c_{ni} + c_{ij}$, so it satisfies the inequalities.

Now consider the case where all bounds do not exist. Let $R$ be the set of variables that have a bound on $-x_n + x_i$ and let $\overline{R}$ be its complement. Set $x_i = c_{ni}$ as before, for variables in $R$, except $x_n$ which we set to zero. Now we find a way to fill in the values of $x_j$ in $\overline{R}$. Note that there can be no bound of the form $-x_i + x_j \le c_{ij}$ for $x_i \in R, x_j \in \overline{R}$ because then by path-closure we would have a bound $-x_n + x_j \le c_{ni} + c_{ij}$ which contradicts $x_j \in \overline{R}$.

Thus, assigning values to variables in $R$ can impose lower bounds on variables in $\overline{R}$, but not upper bounds. Since the whole HSDBM is non-empty we can find a solution $y$ to the subsystem of constraints that only involve variables in $\overline{R}$. Moreover, $t + y$ is a solution for any real $t \in D_{\overline{R}}\Zbb$ where $D_{\overline{R}}$ is the LCM of the congruence divisors of variables in $\overline{R}$. By making $t$ sufficiently large, $t + y$ satisfies the lower bounds imposed by substituting values for $R$ variables. Thus we have a solution making all the bounds $c_{ni}$ tight. Also, by increasing $t$ we can make the variables in $\overline{R}$ arbitrarily large so these are unbounded above.

To prove the case of bounds on $-x_i + x_n$, negate all the variables so that bounds on $-x_n + x_i$ become bounds on $-x_i + x_n$ and vice versa. Now we can apply the same proof as above.
\end{proof}

Therefore, to normalize a satisfiable HSDBM with variable bounds, we:
\begin{enumerate}
    \item Convert the system into VBF form,
    \item Bring the converted system into path-closed and GCD-tightened form,
    \item Convert the system back to a form with variable bounds, and
    \item Binary search on the remaining inequalities to normalize them.
\end{enumerate}
If the system was not satisfiable, we would find out at step 2, at which point we can normalize the system by setting it to some canonical unsatisfiable HSDBM.

Let us now consider how to congruence-normalize simple HSDBMs, so the normal form's congruence constraints must have remainder zero.

\begin{definition}
A \emph{congruence-normalized VBF HSDBM} where the congruence constraint system implies all other valid congruence constraint systems for that solution set, i.e., an HSDBM with congruence divisors $d^*_1, \dots d^*_n$ is congruence-normalized if for all HSDBMs with the same solution set having congruence divisors say $d_1, \dots d_n$, it holds that each $d_i \mid d^*_i$.
\end{definition}
Note that the above definition depends only on the solution set of a system, and so the normalized congruence system of any two systems having the same solution set will be the same. 
In a simple HSDBM, $x_n$ can always take all values in $d_n \Zbb$ by the shifting lemma (\autoref{lem:sdbm-shift}), so any system with the same solution set will have the same congruence $d_n$ for $x_n$. Therefore, $d^*_n = d_n$. We now normalize the remaining congruences iteratively, starting from $x_{n-1}$ and going downwards. Suppose that we already computed $d^*_{i+1}, \dots d^*_n$ and we want to compute $d^*_i$.

Note that for any valid congruence system it holds that $d_i \mid d_{i+1} \mid d^*_{i+1}$ by the harmonic property and congruence normalization. 
Thus $d^*_i$ is the maximum of all $d_i \mid d^*_{i+1}$ such that $x_i \in d_i \Zbb$ holds in the solution set.
By the projection lemma (\autoref{lem:proj-hr-suffix}), we can reduce this to finding the largest possible divisor for $x_1$ in a given constraint system with divisors $d_1, \dots d_n$. As shown above we only need to consider divisors $m \mid d_2$. For it to be a valid divisor, it also needs to not be so dense as to allow additional solutions; we ensure this by mandating that $d_1 \mid m$. Note that the greatest divisor cannot be a non-multiple of $d_1$ anyway, since if $m$ is a valid congruence for $x_1$ then so is $\lcm(d_1, m)$. 

\begin{theorem}
\label{lem:hr-mod-sparse}
Let $H$ be an inequality-normalized simple HSDBM. Let $L$ be the set of $m \in \Nbb$ such that $d_1 \mid m \mid d_2$ and for any solution $x$ of $H$, it holds that $x_1 \in m\Zbb$. We are interested in the sparsest possible congruence divisor, $\max L$. Let $g$ be the GCD of all $c_{i1}$ and $c_{1i}$, and let $q = \gcd(g, d_2)$.

Then $\max L$ is either $d_1$ or $q$. Moreover, it is $q$ iff a specific other HSDBM $H'$ is unsatisfiable, where the constraint system $H'$ can be computed in linear time from the system $H$.


\end{theorem}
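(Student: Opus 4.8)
The plan is to express $\max L$ as a single gcd, bound it above by $q$ using inequality-normalization, prove a dichotomy that collapses it to one of the two stated values, and then read off the test $H'$. Let $G_1 := \gcd\{x_1 \mid x \text{ solves } H\}$. Since "every solution has $x_1 \in m\Zbb$" means exactly "$m$ divides every attained value of $x_1$", the set $L$ consists of the $m$ with $d_1 \mid m \mid d_2$ and $m \mid G_1$; this set is closed under least common multiples, so $\max L = \gcd(d_2, G_1)$, which is a multiple of $d_1$ because $d_1 \mid d_2$ (harmonic property) and $d_1 \mid G_1$ (every solution has $x_1 \in d_1\Zbb$). Thus $d_1 \mid \max L$, and it remains to locate $\gcd(d_2, G_1)$.

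\emph{Upper bound $\max L \mid q$.} Put $m = \max L \in L$. For each edge $(1, i, c_{1i}) \in E$, inequality-normalization (\autoref{def:ineq-norm}) yields a solution $y$ with $-y_1 + y_i = c_{1i}$; here $m \mid y_1$ (as $m \in L$) and $m \mid d_2 \mid d_i \mid y_i$ (harmonic property and the congruence on $x_i$, using $i \ge 2$), so $m \mid y_i - y_1 = c_{1i}$. Symmetrically $m \mid c_{i1}$ for each edge $(i, 1, c_{i1}) \in E$, hence $m \mid g$; combined with $m \mid d_2$ this gives $m \mid \gcd(g, d_2) = q$. So $d_1 \mid \max L \mid q$.

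\emph{Dichotomy.} If $q = d_1$ we are done, so assume $q > d_1$. In the path-closed potential graph, GCD-tightness and the harmonic property give $q \mid \delta_G(1,i) = c_{1i}$ and $q \mid \delta_G(i,1) = c_{i1}$ wherever these edges exist, and $q \mid d_2 \mid d_i$ for every $i \ge 2$. If some $i$ admits a \emph{tight cycle through $1$}, i.e.\ $\delta_G(1,i)+\delta_G(i,1)=0$, then the two opposing edges force $x_i = x_1 + \delta_G(1,i)$ in every solution, whence $q \mid x_1$ (as $q \mid x_i$ and $q \mid \delta_G(1,i)$), so $q \mid G_1$ and $\max L = q$ by the upper bound. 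Otherwise every cycle through $1$ has weight $\ge q$, and I claim some solution has $x_1 \equiv d_1 \pmod q$; since $\max L$ divides $x_1$ in every solution and divides $q$, this forces $\max L \mid \gcd(x_1, q) = d_1$, hence $\max L = d_1$. I expect this claim to be the main obstacle. The plan for it: fix $x_1$ to a suitably large and aligned multiple of $d_1$ in the class $d_1 \bmod q$, and observe that the residual system on $x_2,\dots,x_n$ is a path-closed GCD-tight HSDBM with variable bounds in which each $x_i$ ranges over a window (or half-line) of width $\delta_G(1,i)+\delta_G(i,1)\ge q$ with endpoints $\equiv x_1 \pmod q$; using $q \mid d_i$ together with the suffix projection lemma (\autoref{lem:proj-hr-suffix}), one extends this to an integral solution, most plausibly by an induction that peels off variables in increasing order of stride, mirroring the proof of \autoref{lem:proj-hr-suffix}, the new ingredient being that all window endpoints share the residue $x_1 \bmod q$.

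\emph{The test.} Let $H'$ be $H$ with the congruence on $x_1$ replaced by $x_1 \equiv d_1 \pmod q$; since $q \mid d_2 \mid \dots \mid d_n$ this is again an HSDBM, and it is built from $H$ in linear time. When $q = d_1$, report $\max L = d_1 = q$ directly. When $q > d_1$: $H'$ is satisfiable iff $H$ has a solution with $x_1 \equiv d_1 \pmod q$, which by the dichotomy holds iff $\max L = d_1$ (if instead $\max L = q$ then $x_1 \in q\Zbb$ in every solution, excluding that residue). Therefore $\max L = q$ iff $H'$ is unsatisfiable, as claimed.
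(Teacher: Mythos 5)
Your computation of $\max L=\gcd(d_2,G_1)$ and the upper bound $\max L\mid q$ via inequality-normalization are sound and match the paper's first step, but the dichotomy is where the proposal breaks: the claim you yourself flag as the main obstacle --- ``if no $i$ has $\delta_G(1,i)+\delta_G(i,1)=0$ then some solution has $x_1\equiv d_1\pmod q$'' --- is false, so no peeling induction can establish it. Counterexample: take $x_1\in\Zbb$, $x_2,x_3\in 4\Zbb$ (so $d_1=1\mid d_2=d_3=4$) with constraints $0\le x_1-x_2\le 2$, $0\le x_3-x_1\le 2$, $0\le x_3-x_2\le 4$. This system is path-closed, GCD-tight and inequality-normalized (every bound is attained), $g=\gcd(0,2,2,0)=2$ and $q=\gcd(2,4)=2>d_1$, and no cycle through vertex $1$ is tight: $\delta_G(1,2)+\delta_G(2,1)=\delta_G(1,3)+\delta_G(3,1)=2$. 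Yet the solutions are exactly $x_3\in\{x_2,\,x_2+4\}$ with $x_1=x_2$ in the first case and $x_1=x_2+2$ in the second, so every solution has $x_1$ even; there is no solution with $x_1\equiv d_1\equiv 1\pmod 2$, and indeed $\max L=2=q$. What your static graph condition misses is that $x_1$ can be pinned to a single value at \emph{every} point of the projection $S_{2:n}$ by a pair of opposing bounds that \emph{varies with the point} (here the pinning pair is $(x_2,x_3)$ when $x_3=x_2$ and $(x_3,x_2)$ when $x_3=x_2+4$); this is the same phenomenon as the four-variable example the paper gives just before the theorem. Consequently both halves of your argument that rest on this claim --- the dichotomy itself and the ``iff'' for the test --- are unproven.

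The paper's proof handles exactly this quantifier alternation: for each point of $S_{2:n}$ the admissible $x_1$ form an intersection of intervals whose finite endpoints are multiples of $q$, so either the intersection is a singleton (forcing $q\mid x_1$ at that point) or it contains two adjacent multiples of $d_1$ (forcing $\max L=d_1$); by \autoref{lem:interval-cap} the singleton case means some right endpoint meets some left endpoint, and via \autoref{lem:proj-hr-suffix} the statement ``for every point of $S_{2:n}$ some pair $i,j$ satisfies $-x_j+x_i\ge c_{j1}+c_{1i}$'' becomes the unsatisfiability of the system on $x_2,\dots,x_n$ with the added bounds $-x_j+x_i\le c_{j1}+c_{1i}-1$, which is the $H'$ of the statement. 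Interestingly, your proposed test (keep $H$ but replace the congruence on $x_1$ by $x_1\equiv d_1\pmod q$) is in fact equivalent to the paper's --- in the example above it is unsatisfiable and correctly reports $\max L=q$ --- but proving that equivalence needs the per-point interval argument (when $\max L\ne d_1$ fails, a non-singleton intersection with endpoints in $q\Zbb$ has length at least $q$ and hence contains a multiple of $d_1$ congruent to $d_1$ modulo $q$), not the tight-cycle condition, so as written the proposal has a genuine gap.
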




\begin{proof}
We first show that for any $r \in L$, $r \mid g$.
Suppose not, then without loss of generality, $r$ does not divide some $c_{i1}$. Since the system is inequality normalized, it has some solution satisfying $x_1 = x_i + c_{i1}$. But since $x_i \in d_i\Zbb \subseteq r\Zbb$ and $c_{1i} \notin r\Zbb$, we have $x_1 \notin r\Zbb$, so $r \notin L$ which is a contradiction. So this case is impossible and we have $r \mid g$. Since $r \mid d_2$, we have $r \mid \gcd(g, d_2) = q$. Thus, $\max L \mid q$. If $q = d_1$, we are done and $\max L = d_1$.

Otherwise, let $q \ne d_1$. We now show that either $q \mid \max L$, implying $\max L = q$, or $\max L = d_1$.
Let $S_{2:n}$ be the projection of the solution set onto $x_2, \dots x_n$. For now, assume that all constraints in the system exist. Then every assignment $(p_2, \dots p_n) \in S_{2:n}$ implies constraints of the form $p_i - c_{1i} \le x_1 \le p_i + c_{i1}$. So the set of possible $x_1$ values for this assignment is
$\bigcap_{i=2}^n [p_i - c_{1i}, p_i + c_{i1}] \cap d_1\Zbb$.
This set is non-empty by the definition of $S_{2:n}$. Since $q \mid d_2 \mid p_i$ for all $i \ge 2$ and $q$ divides all the coefficients $c_{1i}$ and $c_{i1}$, all interval endpoints are multiples of $q$. Therefore the endpoints of the intersection are also multiples of $q$. Since $d_1 \mid q$, if the intersection contains more than one element then it definitely contains two adjacent multiples of $d_1$, implying $\max L = d_1$. Otherwise, if the intersection contains exactly one element, that element is surely a multiple of $q$.

Thus, $q \mid \max L$ if for all points in $S_{2:n}$, the intersection of the intervals is a singleton. Otherwise, $\max L = d_1$. The intersection of some intervals is a singleton iff the right endpoint of some interval equals the left endpoint of some interval, possibly the same one (\autoref{lem:interval-cap}). So we have to check whether, for every valid assignment in $S_{2:n}$, some two intervals $[x_i - c_{1i}, x_i + c_{i1}]$ and $[x_j - c_{1j}, x_j + c_{j1}]$ intersect only at their endpoints, i.e., there always exist some $i, j \in \{2, \dots n\}$ such that $x_i + c_{i1} = x_j - c_{1j}$, i.e., $-x_j + x_i = c_{i1} + c_{1j}$. Note that by path closure, if $x_2, \dots x_n \in S_{2:n}$, then it already holds that $-x_j + x_i \le c_{ji} \le c_{j1} + c_{1i}$. So it is only left to check whether 
$\forall x_2, \dots x_n \in S_{2:n},\; \exists i, j \in \{2, \dots n\},\; - x_j + x_i \ge c_{j1} + c_{1i}$.
By logically negating twice, this is equivalent to
$\lnot \exists x_2, \dots x_n \in S_{2:n},\; \forall i, j \in \{2, \dots n\},\; - x_j + x_i < c_{j1} + c_{1i}$.
The strict inequality is equivalent to $-x_j + x_i \le c_{j1} + c_{1i} - 1$ since all variables are integers. By the HSDBM projection lemma (\autoref{lem:proj-hr-suffix}), a vector belongs to $S_{2:n}$ iff it satisfies the constraints on those variables in the HSDBM. Thus the condition above can be checked using a single HSDBM satisfiability check.

If some of the $c_{1i}$ or $c_{i1}$ bounds did not exist then the corresponding intervals in the intersection would have ranged till infinity on that side. Still, the same conclusion holds: $\max L \ne d_1$ iff the intersection is a singleton, meaning that some two finite endpoints have to coincide, and the rest of the proof proceeds the same way. Whenever some $c_{1i}$ or $c_{i1}$ does not exist we simply do not add any of the bounds in the constructed system that depend on that bound.
\end{proof}

\subsubsection{Generalizing to HSDBMs with variable bounds.}
When variable bounds exist, it is possible for a variable to take only a single value, in which case any congruence divisor is valid and the sparsest congruence constraint is not well-defined. In this case, in inequality-normalized form, the variable will have upper and lower bounds equal, so we can immediately detect this case by looking at the variable bounds. When this happens, we first eliminate these variables by substituting in the single value that they can take. We then compute the congruence normalization of the resulting system, then add back the eliminated variables and give them some canonical congruence constraint that preserves the harmonic property. For example, use $x_1 \equiv 0 \mod 1$ if it is the first variable and use the divisor of the previous variable otherwise.

We now consider congruence normalization of systems with variable bounds where every variable takes at least two values.

\begin{lemma}
\label{lem:sdbm-var-bound-vbf-cong-normal}
Let $C$ be an SDBM with variable bounds, where each variable takes at least two values. Let $C'$ be the system converted into VBF form with the added variable $x_{n+1}$ having divisor $D$, with $\Dlcm \mid D$. Let $e_1, \dots e_n$ be the normalized congruence divisors of the converted system, and let $d^*_1, \dots d^*_n$ be the sparsest congruences for the original system. Then $\forall i, e_i = \gcd(D, d^*_i)$.
\end{lemma}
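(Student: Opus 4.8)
The plan is to pin down the solution set of $C'$ in terms of that of $C$, transfer this to the set of values each variable can attain, and then use the fact that the normalized congruence divisors of an HSDBM are completely determined by those value sets through a cumulative-$\gcd$ formula.

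First I would make the construction of $C'$ explicit: add a variable $x_{n+1}$ with congruence $x_{n+1}\equiv 0\bmod D$, replace each variable bound $\ell_i\le x_i\le u_i$ of $C$ by the difference bound $\ell_i\le x_i-x_{n+1}\le u_i$, and keep every other constraint. Because $\Dlcm\mid D$, the divisors of $C'$ form the harmonic chain $d_1\mid\dots\mid d_n\mid D$ and their least common multiple is exactly $D$, so $x_{n+1}$ is a ``last variable'' in the sense of \autoref{lem:sdbm-equiv}. The solutions of $C'$ with $x_{n+1}=0$ are exactly $\{(y,0)\mid y\in S(C)\}$, where $S(\cdot)$ denotes the solution set, so that corollary gives
\[
S(C')=\{\,(y_1+t,\dots,y_n+t,\ t)\ \mid\ y\in S(C),\ t\in D\Zbb\,\}.
\]
Let $\gamma_j(H)$ denote the $\gcd$ of the values that variable $j$ attains over $S(H)$; this is a positive integer because every variable of $C$, and hence of $C'$, attains at least two values. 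From the description above, $\gamma_j(C')=\gcd(\gamma_j(C),D)$ for $j\le n$, while $\gamma_{n+1}(C')=D$ since $x_{n+1}$ ranges over all of $D\Zbb$.

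Next I would establish the characterization: for any simple HSDBM $H$ over $w_1,\dots,w_k$ with divisors $a_1\mid\dots\mid a_k$, the $j$th normalized congruence divisor equals $\mu_j:=\gcd\bigl(\gamma_j(H),\gamma_{j+1}(H),\dots,\gamma_k(H)\bigr)$. One direction is immediate: if $(b_1,\dots,b_k)$ is a valid congruence system for $S(H)$ (necessarily harmonic), then every solution has $w_j\in b_j\Zbb$, so $b_j\mid\gamma_j(H)$, and also $b_j\mid b_{j+1}\mid\dots\mid b_k$, whence $b_j\mid\mu_j$. Conversely, $(\mu_1,\dots,\mu_k)$ is harmonic and $a_j\mid\mu_j$ for every $j$ — by downward induction: $a_k\mid\gamma_k(H)=\mu_k$, and $a_j\mid\gamma_j(H)$ together with $a_j\mid a_{j+1}\mid\mu_{j+1}$ give $a_j\mid\mu_j$ — so $\mu_j\Zbb\subseteq a_j\Zbb$ and any point satisfying the inequalities of $H$ and the congruences $w_i\in\mu_i\Zbb$ also satisfies $w_i\in a_i\Zbb$, i.e.\ lies in $S(H)$; hence the $\mu$-tuple is itself a valid congruence system for $S(H)$, so it is the componentwise-sparsest one. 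Applying this with $H=C$ yields $d^*_j=\gcd\bigl(\gamma_j(C),\dots,\gamma_n(C)\bigr)$, and with $H=C'$ (whose last normalized divisor is $\gamma_{n+1}(C')=D$) yields $e_j=\gcd\bigl(\gamma_j(C'),\dots,\gamma_n(C'),D\bigr)$ for $j\in[n]$.

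Finally I would combine these by $\gcd$ bookkeeping: for $j\in[n]$,
\begin{align*}
e_j&=\gcd\bigl(\gamma_j(C'),\dots,\gamma_n(C'),D\bigr)=\gcd\bigl(\gcd(\gamma_j(C),D),\dots,\gcd(\gamma_n(C),D),D\bigr)\\
&=\gcd\bigl(\gamma_j(C),\dots,\gamma_n(C),D\bigr)=\gcd\bigl(\gcd(\gamma_j(C),\dots,\gamma_n(C)),\,D\bigr)=\gcd(d^*_j,D),
\end{align*}
which is the desired identity. I expect the main obstacle to be the cumulative-$\gcd$ characterization of the normalized divisors, and in particular its ``achievability'' half — that coarsening every congruence down to the running $\gcd$ over the suffix introduces no spurious solution — which genuinely relies on the harmonic chain (the analogous statement is false for general SDBMs, cf.\ system~\eqref{eq:sdbm-counter}); a secondary point that needs care is checking, in the first step, that the least common multiple of the divisors of $C'$ equals $D$ rather than merely $\Dlcm$, so that \autoref{lem:sdbm-equiv} really does apply to $x_{n+1}$.
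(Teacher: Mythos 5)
Your proof is correct and its computational core is the same as the paper's: describe the solution set of $C'$ via the shifting corollary (\autoref{lem:sdbm-equiv}), observe that the value set of each $x_i$ in $C'$ is the $D\Zbb$-saturation of its value set in $C$, and use $\gcd_{t\in\Zbb}(a+tD)=\gcd(a,D)$. The paper's proof stops essentially there, identifying the normalized divisor of each variable directly with the gcd of that variable's value set; you instead insert an intermediate characterization, namely that the normalized divisors are the suffix-cumulative gcds $\mu_j=\gcd(\gamma_j,\dots,\gamma_k)$, and you prove both directions (any valid harmonic tuple divides $\mu$ componentwise, and $\mu$ is itself harmonic and admits no spurious solutions because the original divisors divide it). This buys a cleaner treatment of the harmonicity requirement built into the definition of congruence normalization (and matches how \autoref{lem:hr-mod-sparse} computes $d^*_i$ as the sparsest divisor of $d^*_{i+1}$), whereas the paper's per-variable identification glosses over it; since $\gcd$ commutes with your bookkeeping, both readings yield the stated identity $e_j=\gcd(d^*_j,D)$. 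One small scoping slip: you state the characterization for \emph{simple} HSDBMs but then apply it to $C$, which has variable bounds; this is harmless because neither direction of your argument uses the absence of variable bounds (only zero remainders and the ``at least two values'' hypothesis, which keeps the $\gamma_j$ positive), but you should either state it for systems with variable bounds or note explicitly that the proof transfers.
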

\begin{proof}
Let $S$ be the set of values $x_i$ takes in $C$ and let $T$ be the set of values it takes in  $C'$. Then $T = \{x + tD \mid x \in S, t \in \Zbb\}$ by the shifting lemma (\autoref{lem:sdbm-shift}). The sparsest congruence divisor for $S$ is the GCD of all elements in $S$, which we call $g$. Similarly, the sparsest congruence divisor for $T$ is the GCD of all elements in $T$, which is equal to $\gcd(g, D)$ since for any $a$, $\gcd_{t \in \Zbb}(a + tD) = \gcd(a, D)$.
\end{proof}


\begin{lemma}
\label{lem:hsdbm-bounds-normal-dn}
In an HSDBM with variable bounds where $x_n$ takes at least two possible values, the sparsest possible congruence divisor for $x_n$ is $d_n$.
\end{lemma}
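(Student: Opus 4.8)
The plan is to reduce the claim to a statement about the set $V$ of values that $x_n$ takes over the solution set. Since $x_n \equiv r_n \mod d_n$ holds throughout, every element of $V$ is congruent to $r_n$ modulo $d_n$, so the largest modulus $d$ for which $x_n \equiv r \mod d$ holds for some $r$ — the sparsest congruence divisor for $x_n$ — is exactly $g := \gcd\{v - v' : v, v' \in V\}$, which is a multiple of $d_n$ (the pairwise differences of elements of $V$ are all divisible by $d_n$). Hence it suffices to produce two elements of $V$ differing by exactly $d_n$: this forces $g \mid d_n$, and therefore $g = d_n$.

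First I would normalize. Substituting $y_i = x_i - r_i$ makes every congruence $y_i \in d_i\Zbb$; this leaves the chain $d_1 \mid \dots \mid d_n$ intact, turns the variable bounds into $\ell_i - r_i \le y_i \le u_i - r_i$, and only translates $V$ by the constant $r_n$, so it preserves all pairwise differences and the hypothesis $|V| \ge 2$. Then, using $\Dlcm = d_n$ in the harmonic case, I would pass to VBF form by adjoining a fresh variable $x_{n+1}$ with $x_{n+1} \in d_n\Zbb$ and replacing each $\ell'_i \le y_i \le u'_i$ by $\ell'_i \le y_i - x_{n+1} \le u'_i$; call the result $C'$. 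Exactly as when converting an SDBM with variable bounds to VBF form, $C'$ is a \emph{simple} VBF HSDBM whose chain is $d_1 \mid \dots \mid d_n \mid d_n$, and $(y_1, \dots, y_n)$ solves the reduced system iff $(y_1, \dots, y_n, 0)$ solves $C'$; so $V$ is precisely the set of $a$ such that $(a, 0)$ lies in the projection of the solution set of $C'$ onto $(x_n, x_{n+1})$.

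Next I would bring $C'$ to path-closed, GCD-tight form — this terminates (by \autoref{thm:hsdbm-runtime}) without changing the solution set — and apply the HSDBM suffix projection lemma (\autoref{lem:proj-hr-suffix}) to the two-element suffix $\{x_n, x_{n+1}\}$. The projection then equals the solution set of the constraints mentioning only $x_n$ and $x_{n+1}$: namely $x_n \in d_n\Zbb$, $x_{n+1} \in d_n\Zbb$, together with at most the two difference bounds $x_n - x_{n+1} \le P$ and $x_{n+1} - x_n \le Q$, whose right-hand sides (when present) are multiples of $\gcd(d_n, d_n) = d_n$ by GCD-tightness, and which are jointly satisfiable since $C'$ is. Slicing at $x_{n+1} = 0$ gives $V = \{ a \in d_n\Zbb : -Q \le a \le P \}$, with the convention that an absent bound is $\pm\infty$; in every case $V$ is a contiguous run of multiples of $d_n$. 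As $|V| \ge 2$, it contains two consecutive multiples of $d_n$, which differ by exactly $d_n$, and the first paragraph concludes.

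The step I expect to be the main obstacle is the projection step: one must handle with care the cases in which one or both of the difference bounds between $x_n$ and $x_{n+1}$ are missing (so that $V$ is a half-line, or all of $d_n\Zbb$, rather than a finite interval), and one must check that \autoref{lem:proj-hr-suffix} really does apply — which it does precisely because the remainder-zero reduction is performed \emph{before} $x_{n+1}$ is adjoined, so that $C'$ is simple and VBF. An alternative would be to prove a variable-bounds analogue of \autoref{lem:proj-hr-suffix} for the singleton suffix $\{x_n\}$ directly; routing through the VBF form, however, keeps everything within results already established and parallels how $d^*_n = d_n$ was argued in the simple VBF case.
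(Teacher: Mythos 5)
Your proposal is correct and follows essentially the same route as the paper: convert to VBF form with a fresh variable $x_{n+1}$ of divisor $d_n$, apply the suffix projection lemma (\autoref{lem:proj-hr-suffix}) to $\{x_n, x_{n+1}\}$, slice at $x_{n+1}=0$ to see that the values of $x_n$ form a contiguous run of multiples of $d_n$, and use the two-values hypothesis to conclude. The only difference is that you spell out details the paper leaves implicit (the remainder-zero shift, path-closing and GCD-tightening before invoking the projection lemma, the gcd-of-differences characterization, and the missing-bound cases), which is fine.
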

\begin{proof}
Convert the system to VBF form. Let $x_{n+1}$ be the variable added for the conversion. By the projection lemma (\autoref{lem:proj-hr-suffix}), the set of valid values of these two variables is the set of constraints involving only them. The set of valid values of $x_n$ in the original system is the set of values of $x_n$ in the converted system with $x_{n+1} = 0$, and is therefore the set of multiples of $d_n$ within the variable bounds of $x_n$. Thus the sparsest congruence divisor for $x_n$ is still $d_n$. 
\end{proof}

We now show how to compute the sparsest congruence constraints.
\begin{theorem}
Given an HSDBM with variable bounds where every variable takes at least two values, 
the sparsest congruence constraints are equal to sparsest constraints for the system after converting to VBF form.
\end{theorem}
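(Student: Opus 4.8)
The plan is to reduce the variable-bounds case to the VBF case by the same auxiliary-variable device used for satisfiability, and then compare the two normal forms divisor by divisor. We may assume, by the reductions already developed, that $C$ is satisfiable, inequality-normalized, and has remainder-zero congruences. Adjoin a fresh variable $x_{n+1}$, replace each bound $\ell_i \le x_i \le u_i$ by $\ell_i \le x_i - x_{n+1} \le u_i$, and add $x_{n+1} \equiv 0 \bmod D$ for a multiple $D$ of $\Dlcm$; taking $D = \Dlcm = d_n$ keeps $d_1 \mid \dots \mid d_n \mid D$ harmonic, so the result $C'$ is a VBF HSDBM. First I would check the hypotheses survive the conversion: the solutions of $C$ are exactly the projections onto $x_1, \dots, x_n$ of the solutions of $C'$ with $x_{n+1} = 0$ (as in the satisfiability reduction), and every variable of $C'$ takes at least two values — $x_{n+1}$ takes infinitely many by the shifting lemma (\autoref{lem:sdbm-shift}), while for $i \le n$ the difference $x_i - x_{n+1}$ ranges over the same two-or-more values that $x_i$ does in $C$, hence so does $x_i$. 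Thus $C'$ is congruence-normalizable by the VBF procedure, i.e. by the iterated form of \autoref{lem:hr-mod-sparse} using the suffix-projection lemma \autoref{lem:proj-hr-suffix}.

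Next I would set up the comparison. Write $g_i$ for the GCD of the values $x_i$ takes in $C$ and $g'_i$ for the GCD of the values it takes in $C'$; by (the proof of) \autoref{lem:sdbm-var-bound-vbf-cong-normal} we have $g'_i = \gcd(g_i, D)$ for $i \le n$ and $g'_{n+1} = D$. The top-down normalization characterized in the run-up to \autoref{lem:hr-mod-sparse} says the sparsest harmonic divisors of a VBF HSDBM satisfy $d^*_n = g_n$ and, for $i < n$, that $d^*_i$ is the largest divisor $m$ of $d^*_{i+1}$ with $x_i \in m\Zbb$ on the solution set, i.e. $d^*_i = \gcd(g_i, d^*_{i+1})$; the same formulas with primes give the divisors $e_{n+1} = g'_{n+1} = D$ and $e_i = \gcd(g'_i, e_{i+1})$ for $C'$. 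Moreover $g_n = d_n$ by \autoref{lem:hsdbm-bounds-normal-dn}, since $x_n$ takes at least two values.

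The core of the argument is then a downward induction showing $e_i = d^*_i$ for all $i \in [n]$. For $i = n$: $e_n = \gcd(g'_n, e_{n+1}) = \gcd(\gcd(g_n, D), D) = \gcd(g_n, D) = g_n = d^*_n$, since $g_n = d_n$ divides $D$. For $i < n$, assuming $e_{i+1} = d^*_{i+1}$: $e_i = \gcd(\gcd(g_i, D), d^*_{i+1}) = \gcd(g_i, d^*_{i+1}, D)$, and because $d^*_{i+1} \mid d^*_n = d_n \mid D$ the entry $D$ is redundant in this GCD, giving $e_i = \gcd(g_i, d^*_{i+1}) = d^*_i$. Hence the sparsest congruences of $C$, namely $x_i \equiv 0 \bmod d^*_i$ for $i \in [n]$, coincide with the first $n$ sparsest congruences of $C'$; the extra constraint $x_{n+1} \equiv 0 \bmod D$ is simply dropped when passing back to variable-bound form.

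I expect the main obstacle to be the bookkeeping between the two notions of ``sparsest divisor'' — the raw per-variable value-GCD $g_i$, which is what \autoref{lem:sdbm-var-bound-vbf-cong-normal} controls, and the harmonic-normalized divisor $d^*_i$ actually stored in the normal form — together with pinning down precisely why the auxiliary divisor $D$ drops out of every GCD that arises. The structural fact that makes this work is that each $d^*_{i+1}$ divides $d_n = \Dlcm$, which in turn divides $D$; everything else is routine. A minor point to be careful about is confirming that the ``every variable takes at least two values'' hypothesis is inherited by $C'$, since both \autoref{lem:hsdbm-bounds-normal-dn} and \autoref{lem:sdbm-var-bound-vbf-cong-normal} need it.
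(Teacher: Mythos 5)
Your proposal is correct and follows essentially the same route as the paper: convert to VBF form with the auxiliary variable given divisor $D = d_n$, invoke \autoref{lem:sdbm-var-bound-vbf-cong-normal} and \autoref{lem:hsdbm-bounds-normal-dn}, and conclude $e_i = d^*_i$ because $d^*_i \mid d^*_n = d_n \mid D$ makes the extra $\gcd$ with $D$ redundant. Your version merely fills in more detail (working through the per-variable value GCDs and a downward induction instead of citing the lemma's statement directly), which is a refinement of, not a departure from, the paper's argument.
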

\begin{proof}

We convert the system to VBF form by adding a variable $x_{n+1}$ with congruence divisor $d_{n+1} := d_n$. We then compute its congruence normalization to obtain divisors $e_1 \mid \dots \mid e_n \mid e_{n+1}$. 
Let $d^*_1 \mid \dots \mid d^*_n$ be the true sparsest congruences for the input system. Then $e_i = \gcd(d_{n+1}, d^*_i)$ by \autoref{lem:sdbm-var-bound-vbf-cong-normal} and $d_{n+1} = d_n = d^*_n$ by \autoref{lem:hsdbm-bounds-normal-dn}. Hence $e_i = \gcd(d^*_n, d^*_i) = d^*_i$ since $d^*_i \mid d^*_n$.
\end{proof}

\subsubsection{Generalizing to arbitrary congruence constraints.}
For HSDBMs with arbitrary congruence constraints, we can find any solution and shift the system so that the origin becomes a solution. Then all valid congruence constraints have remainder zero since there is a solution at the origin. Computing the sparsest possible congruence for this system and performing the inverse shift therefore gives us the sparsest possible congruence for the original system.

\section{Operations for Abstract Interpretation}
We introduce intersection, equality, inclusion, and join operations for (H)SDBMs, completing the set of operations typically required for abstract interpretation.

\subsubsection{Intersection.}
To intersect, we just take the tighter of the bounds on each $-x_i + x_j$ and of the variable bounds.

\subsubsection{Equality.}
We check if two HSDBMs have equal solution sets by checking if their normal forms are equal. For simple SDBMs, we first check if their normalized inequalities are equal, then compare congruences: given an SDBM $C$, $D \in \Nbb$ such that all $d_i \mid D$, and $r \in \{1, \dots D - 1\}$, there exists a solution with $x_i \equiv r \mod D$ iff there exists one with $x_i = r$, by the shifting lemma (\autoref{lem:sdbm-shift}).

Now given two VBF SDBMs, let $D$ be the LCM of their congruence divisors. Collecting which values modulo $D$ each variable can take in each system takes $2nD$ satisfiability checks. If both are equal and the normalized inequalities are also equal then both systems have equal solution sets. Otherwise, they do not.

Now given two SDBMs with variable bounds, we again set $D$ to be the LCM of the congruence divisors and inequality normalize both, then convert them to VBF form using the same congruence divisor $D$ for the added variable.
The two original systems are equivalent iff the converted systems are, and we know how to check equality of solution sets for VBF SDBMs.

\subsubsection{Inclusion.}
We can check for inclusion using intersection and equality checks since for sets $A$ and $B$, we have $A \subseteq B$ iff $A \cap B = A$.

\subsubsection{Join.}
Given two SDBMs in normal form, the system with the smallest solution set that encompasses both the inputs' solution sets is the system that takes the looser of the two bounds on each $-x_i + x_j$. When one of the systems has no bound, the result should have no bound either. This follows from \autoref{def:ineq-norm}.

For the congruences of the joined system, we compute the congruence normalization of both the input systems and for each variable, take the sparsest congruence constraints that encompass both. Say the two constraints are $x \equiv r_1 \mod q_1$ and $x \equiv r_2 \mod q_2$. Let $p$ be any solution to these two constraints, then the two constraints are equivalent to $x - p \equiv 0 \mod q_1$ and $x - p \equiv 0 \mod q_2$ respectively. The sparsest constraint that holds for $x - p$ satisfying either one of these constraints is $x - p \equiv 0 \mod \gcd(q_1, q_2)$, i.e., $x \equiv p \mod \gcd(q_1, q_2)$.



\section{Empirical Study}

The goal of this study is to demonstrate the \emph{suitability} of SDBM for program representation and analysis. To this end, we instrumented several optimizing compilers that use polyhedral domains internally and analyzed those domains. Evaluating the compilation time or the run time of the compiled program is beyond the scope of the study as it requires additional engineering to compete with highly-optimized Presburger arithmetic libraries~\cite{isl,FPL}.

\subsection{Methodology}

We instrumented the following compilation and analysis projects.
\begin{itemize}
    \item The MLIR compiler infrastructure~\cite{MLIR}, widely used in production to support domains ranging from machine learning compilers to hardware synthesis. We used MLIR version \texttt{llvmorg-18-init-16246-g4daea501c4fc} (Jan 5, 2024) and compiled the test suite provided with the project using \texttt{ninja check-mlir}. We collected statistics from 2176 compiler invocations. Some tests feature multiple invocations.
    \item The Polygeist CUDA-to-OpenMP cross-compiler~\cite{polygeist} based on the archived artifact~\cite{polygeist_openmp}. We compiled 17 benchmarks from the CUDA subset of the Rodinia suite~\cite{rodinia} accepted by Polygeist with the same 7 configurations as~\cite{polygeist_openmp}.
    \item The PPCG polyhedral compiler~\cite{ppcg} version \texttt{0.09.1} (Apr 2, 2023, most recent release). We compiled 30 benchmarks from the Polybench/C benchmark suite version \texttt{4.2.1}~\cite{polybench} using \texttt{ppcg --target=c --openmp --tile} to enable autoscheduling, parallelization and tiling.
 \end{itemize}

MLIR and MLIR-based Polygeist were instrumented to intercept the creation of affine expressions and sets bounded by such expressions as well as (integer) emptiness checks of these sets. For each expression and set, we verified whether it can be expressed as a (H)SDBM. We say that an expression can be  In MLIR, unique expressions are reused so that the collected statistics reflect unique SDBM objects that existed throughout the execution of the test.
PPCG, and its underlying isl library~\cite{isl}, were instrumented to check if the following objects fit (H)SDBM: affine constraints, convex sets, unions thereof and unions of non-convex sets in multiple vector spaces. We collected all such objects at several moments in the compilation process: after constructing the initial representation, after performing dependence analysis, before and after scheduling, and just before final code generation.











\subsection{Prevalence of SDBMs}

\subsubsection{MLIR.}
Out of 2176 test cases, 1264 (58.1\%) construct affine expressions throughout their lifetime. The following analysis focuses only on those.
Overall, 96.3\% of affine expressions and 95.6\% of integer sets (we consider MLIR multidimensional affine maps as such) can be represented using SDBM. 714 (56.5\%) of the cases use only SDBM expressions.
In the remaining cases, \meanstddev{90.3\%}{15.9}\footnote{The \meanstddev{\mu}{\sigma} notation indicates the mean and standard deviation.} of expressions and \meanstddev{88.2\%}{17.5} sets can be represented using SDBM.

45 of the test cases perform a total of 7695 emptiness checks.
6262 (81.4\%) of these are performed on HSDBM integer sets, and none on more general SDBMs.
22 (48.9\%) test cases perform emptiness checks only on HSDBM.
In the remaining cases, \meanstddev{73.5\%}{37.7} of the checks are performed on HSDBM sets.

These results suggest that SDBM is sufficient to represent a large fraction of affine constructs appearing in a compiler infrastructure supporting polyhedral compilation~\cite{uday_tensor_cores}, machine learning compilers~\cite{tinyiree}, high-level synthesis~\cite{phism} and other hardware design~\cite{circt}. It is worth noting that the test suite covers rare representational edge cases, so practical applications may have better coverage. For example, many non-SDBM expressions are found in Affine dialect tests, which exercise the full expressive power of (quasi-)affine expressions, including divisions by parameters, huge coefficients, or expressions with hundreds of terms.

Some of the 17 compiled benchmarks consist of multiple translation units processed separately, for a total of 39. Each one was compiled with 7 different configurations, leading to the total of 273 test cases. Out of these, 266 (97.4\%) construct affine expressions and 50 (18.3\%) perform emptiness checks.

\subsubsection{Polygeist.}
96.3\% of the affine expressions and 95.6\% of the integer sets fit the SDBM domain. 185 (69.5\%) cases use only SDBM constructs. The remaining cases have \meanstddev{95\%}{5.1} and \meanstddev{93.8\%}{6.4} SDBM expressions and sets, respectively.

These test cases perform a total of 540 emptiness checks all of which can be expressed using HSDBM. In Polygeist, emptiness checks are performed during dependence analysis. Since the benchmarks are originally written in CUDA, they use only simple single-variable subscript expressions, leading to compatible $i - j$ expressions in dependence relations.

These results indicate that SDBM is suitable for end-to-end compilation, even if a more expressive representation may be occasionally required. Note also the higher ratio compared to the MLIR test suite.

\subsubsection{PPCG.}
While none of the benchmarks can be completely processed using exclusively SDBM, most steps of the compilation process are largely compatible.
Specifically, the initial representation of the program uses only SDBM for 25 (83.3\%) programs, and the result of dependence analysis is representable for 26 (86.7\%) programs.
ILP-based affine scheduling does not match SDBM requirements for any of the programs since it extensively uses multi-variable expressions through its use of the Farkas lemma \cite{10.1145/2429069.2429127}.
On the other hand, the resulting schedule can be expressed as a union of SDBM integer sets for 21 (70\%) programs.
Using the hierarchical form of the schedule~\cite{schedule_trees} instead of a flat union brings this number up to 24 (80\%).
When applying loop tiling on a hierarchical schedule, 23 (76.7\%) programs still use only SDBM with divisibility constraints associated with tile sizes.
Finally, code generation is expressible only for the one program, \texttt{durbin.c},
as it produces linearized expressions of the form $C \cdot i + ii$ to recombine loop indexes after tiling (such linearization was previously avoided in the hierarchical schedule); \texttt{durbin.c} does not contain a tileable loop nest and only accesses single-dimensional arrays with subscripts of the form \texttt{i} and \texttt{i - j - C}, which are SDBM.
We could also confirm our intuition that \emph{all SDBM expressions are also HSDBM}. This is due to congruences being introduced by tiling, which uses the fixed factor of $32$ by default. We verified this by disabling tiling, which brought the number of supported test cases for flat schedule and code generation to 21 (70\%). Tile factors are typically chosen as powers of two or fractions of the problem sizes, so they are likely to remain divisible.

Overall, across all stages and benchmarks, \meanstddev{85.6\%}{21.6} of affine constraints and \meanstddev{78.1\%}{37} sets are SDBM. This number ranges from \meanstddev{41.5\%}{14.3} constraints for the ILP set to \meanstddev{99.8\%}{0.5} for dependency analysis, and from \meanstddev{10.8\%}{24.2} sets for code generation to \meanstddev{99.6\%}{1.1} for dependency analysis.
These results suggest that SDBM combined with structured affine representations such as schedule trees may power a large part of a polyhedral compiler, for all stages except ILP-based affine scheduling.


\subsection{Applications to Translation Validation}
\label{sec:tv}

We additionally used our instrumented version of MLIR\footnote{\texttt{llvmorg-18-init-16246-g4daea501c4fc(Jan 5,2024)}, same for MLIR test suite.} to process three end-to-end machine learning models as described in~\cite{10.1007/978-3-031-13188-2_19}. Specifically, we took the following models (fetched on January 19, 2024).
\begin{itemize}
    \item \texttt{text\_classification\_v2} obtained from \url{https://www.tensorflow.org/lite/examples/text_classification/overview}.
    \item MobileNet v3, variation ``large-075-224-classification'' obtained from \url{https://www.kaggle.com/models/google/mobilenet-v3/frameworks/tfLite}.
    \item SqueezeNet: \url{https://www.kaggle.com/models/tensorflow/squeezenet}.
\end{itemize}

We further converted these models from the original TFLite format into the MLIR TOSA dialect using the following TensorFlow tools:
\texttt{flatbuffer\_translate --tflite-flatbuffer-to-mlir} to yield a TFLite MLIR representation, \texttt{tf-opt --tfl-to-tosa-pipeline} to obtain TOSA.\footnote{Both were compiled from source: \url{https://github.com/tensorflow/tensorflow} version \texttt{ae7eb0931d2973095}, which depends on a different version of MLIR, but the textual representation of TOSA in both is compatible.}
We do not run the models but (partially) compile them along the lines of~\cite{10.1007/978-3-031-13188-2_19}:\footnote{We noticed the existing flag \texttt{tosa-to-linalg-pipeline} does not produce any code, so we reconstructed the MLIR pass pipeline from its source code in \texttt{mlir/lib/Conversion/TosaToLinalg/TosaToLinalgPass.cpp}. Notable differences with the previously reported pipeline include additional TOSA normalization passes and the decomposition of the Standard MLIR dialect into the Arith and Tensor dialects, as well as the recomposition of bufferization passes into a single one.}

\noindent
{\scriptsize
\begin{verbatim}mlir-opt --pass-pipeline='builtin.module(func.func(tosa-optional-decompositions),
  canonicalize, func.func(tosa-infer-shapes, tosa-make-broadcastable, tosa-to-linalg-named),
  canonicalize, func.func(tosa-layerwise-constant-fold, tosa-make-broadcastable),
  tosa-validate, func.func(tosa-to-linalg, tosa-to-arith, tosa-to-tensor),
  linalg-fuse-elementwise-ops, one-shot-bufferize)'
\end{verbatim}}

\noindent
We collected SDBM-related statistics from all three cases in \autoref{tab:model_sdbm}. None of the models required an emptiness check.

\begin{table}[h!tb]
    \vskip-5pt
    \centering
    \begin{tabular}{lrrrrrr}\hline
      \textbf{Model}  & \multicolumn{3}{c}{\textbf{Sets}} & \multicolumn{3}{c}{\textbf{Expressions}} \\
      ~ & \textbf{Total} & \multicolumn{2}{c}{\textbf{SDBM}} & \textbf{Total} & \multicolumn{2}{c}{\textbf{SDBM}}  \\\hline
      Text Classification & 4099 & 4099 & (100\%) & 9148 & 9148 & (100\%) \\
      MobileNet  & 58876 & 52596 & (89.3\%) & 208840 & 202110 & (96.8\%) \\
      SqueezeNet & 28131 & 27806 & (98.8\%) & 96140 & 95490 & (99.3\%) \\\hline
    \end{tabular}
    \smallskip
    \caption{SDBM is sufficient to represent most affine sets and expressions during the partial compilation piepline from TOSA to the bufferized Linalg dialect in MLIR.}
    \label{tab:model_sdbm}
    \vskip-15pt
\end{table}

\section{Related Work}


The relevance of weakly relational domains for loop parallelization and optimization is well established \cite{DBLP:books/mk/AllenK2001}.
More recently, UTVPI approximations enabling complex affine transformations (such as those enabled by PPCG in the empirical evaluation) have also been identified \cite{10.1145/2429069.2429127}.
But these techniques remain unaware of congruence properties, missing optimization opportunities as a result \cite{10.1145/2429069.2429127}.

There is a rich literature on sub-polyhedral domains \cite{10.1145/3457885}.
APRON\footnote{\url{https://antoinemine.github.io/Apron/doc/api/c}} \cite{10.1007/978-3-642-02658-4_52} provides a reference implementation for many of these.
See also
ELINA\footnote{\url{https://elina.ethz.ch}} \cite{10.1145/3009837.3009885} for advanced algorithms and optimizations.
Combinations of abstract domains are popular in static analysis \cite{10.1007/978-3-642-19805-2_31,10.1145/154630.154650}.
These aim at increasing precision by ``cross-fertilization'' of analyses without the need for new abstract domains.
Yet actual intersections of sub-polyhedral domains received much less attention.
Bygde surveys some of these \cite{Bygde948}, the most closely related being the trapezoidal domain \cite{DBLP:conf/ics/Masdupuy92} which combines lattices with intervals, forming a non-relational domain.

Considering SDBM algorithms themselves, our iterative approach to the satisfiability problem is reminiscent of the dynamic all-pairs shortest paths \cite{10.1145/780542.780567} and incremental closure algorithms \cite{HOWE20191}.
Complexity results in this space relate to the cubic upper bound of the Floyd-Warshall algorithm and do not contribute to improving the complexity of the GCD tightening iterations.

The weak NP-completeness of TVPI has been established by Hochbaum and Naor \cite{doi:10.1137/S0097539793251876,hochbaum2004monotonizing,DBLP:journals/networks/Hochbaum21}, together with a (pseudo-polynomial) integer linear programming algorithm that is quadratic in the largest bound of the inequalities.
Our SDBM algorithm has lower complexity and also makes it pseudo-polynomial in the congruence divisors instead.
In compilation problems of interest, congruences correspond to tile and vector sizes dictated by hardware parameters; they are much smaller than bounds of the iteration spaces and arrays.

\section{Conclusion}

We introduced the Strided Difference Bound Matrix (SDBM) abstraction combining two-variable inequalities with congruence constraints.
We demonstrated the prevalence of these across the compiler test suites of MLIR, Polygeist and PPCG.
We showed that the satisfiability of SDBM is NP-hard but also admits an algorithm pseudo-linear in the LCM of the congruence divisors. We identified the Harmonic SDBM (HSDBM) sub-case that commonly arises in compilation problems for deep learning and other areas.
HSDBM satisfiability has a worst-case complexity of $\Ocal(n^4)$, which is practical for uses in compilers and has the potential to accelerate verification tools based on more general Presburger arithmetic. We gave an $\Ocal(mn^4 \log(n\Dlcm))$ algorithm for HSDBM normalization. Finally, given a pair of normalized HSDBM, we showed linear-time algorithms to check for equality and to perform the join operation. The design of a widening operator, also necessary for abstract interpretation, is left for future work.

\begin{credits}

\subsubsection{\discintname}
The authors have no competing interests to declare that are
relevant to the content of this article.
\end{credits}

\newpage

\begin{subappendices}

\renewcommand{\thesection}{\Alph{section}}%

\section{Deferred Proofs}
\label{app:lemmas}

\begin{theorem}
\label{thm:sdbm-sat-np-hard}
The satisfiability problem for simple SDBMs is \NP-hard.
\end{theorem}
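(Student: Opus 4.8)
The plan is to reduce from a classical \NP-complete problem about simultaneous congruences, extending the hardness results of Lagarias~\cite{lagarias1985simultaneous}: given pairs $(a_1, b_1), \dots, (a_n, b_n)$ with $0 \le a_j < b_j$, decide whether some integer $x$ satisfies $x \not\equiv a_j \pmod{b_j}$ for every $j$, i.e.\ whether the progressions $a_j + b_j\Zbb$ fail to cover $\Zbb$ (``Simultaneous Incongruences''; a witness may be taken in $[0,\lcm(b_1,\dots,b_n))$, of polynomial bit length, so the problem lies in \NP, and its \NP-hardness is known). The heart of the reduction is a gadget that turns one ``avoid'' condition into a single strided variable plus two difference bounds. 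I claim that for $0 \le a < b$, the condition $x \not\equiv a \pmod b$ holds iff there is some $y \in b\Zbb$ with $a + 1 \le x - y \le a + b - 1$. For one direction, given such a $y$ we have $x \equiv x - y \pmod b$, and $x - y$ ranges over $b - 1$ consecutive integers whose residues mod $b$ are exactly $\{0, 1, \dots, b-1\} \setminus \{a\}$. For the converse, a valid $y$ exists iff the length-$(b-1)$ integer interval $[x - a - b + 1,\, x - a - 1]$ contains a multiple of $b$; such an interval misses every multiple of $b$ only when its left endpoint is $\equiv 1 \pmod b$, i.e.\ only when $x \equiv a \pmod b$, so $x \not\equiv a \pmod b$ forces a suitable $y$ to exist.

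With the gadget in hand, from an instance $(a_j, b_j)_{j=1}^n$ I would build the simple SDBM on variables $x, y_1, \dots, y_n$ with strides $d_x = 1$ and $d_{y_j} = b_j$, and, for each $j$, the two constraints $-x + y_j \le -(a_j + 1)$ and $-y_j + x \le a_j + b_j - 1$. This is already in the required form — variable-bound-free, all congruence remainders zero — and is built in linear time. Since $y_j$ appears only in the two constraints of its own gadget, satisfiability of the whole system decouples over $j$, so by the gadget equivalence the SDBM is satisfiable iff the incongruence instance is, which is the theorem. (Alternatively, starting from the bounded variant ``is there such an $x$ in $[0,L)$'' one would add variable bounds on $x$ and then apply the reductions from SDBMs with variable bounds to simple SDBMs established earlier; either route suffices.)

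What remains to fill in is short: the forward direction is immediate — project a solution onto $x$ and reduce each $x - y_j$ modulo $b_j$ — and the backward direction is exactly the ``$\Rightarrow$'' half of the gadget claim, whose only non-routine ingredient is the elementary count of when a run of $b-1$ consecutive integers avoids all multiples of $b$. I expect the main difficulty to lie not in the reduction but in invoking the source problem in precisely the form needed: arbitrary, possibly repeated moduli $b_j$ with no side promise on the instance, since that is what lets the gadgets encode genuine modular choices rather than rigid parity constraints. Pinning down (or re-deriving, e.g.\ from a covering-system argument) that this version is \NP-complete is the step that needs the most care.
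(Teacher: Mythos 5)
Your reduction is correct, but it takes a genuinely different route from the paper. The paper reduces from Lagarias's \NP-hardness result for integer programs of the form $-s/t \le (a_i/b_i)y - x_i \le s/t$: it first observes that simple-SDBM satisfiability is equivalent (via $y_i = x_i/d_i$) to ILPs whose constraints read $d_j y_j - d_i y_i \le c_{ij}$, and then massages Lagarias's instances into that shape by clearing denominators (multiplying each pair of inequalities by $t b_i A/a_i$ with $A = \lcm(a_1,\dots,a_n)$). You instead reduce from Simultaneous Incongruences --- given pairs $(a_j,b_j)$, is there an $x$ with $x \not\equiv a_j \pmod{b_j}$ for all $j$ --- via the gadget ``$x \not\equiv a \pmod b$ iff some $y \in b\Zbb$ satisfies $a+1 \le x - y \le a+b-1$,'' which checks out (a run of $b-1$ consecutive integers misses all multiples of $b$ exactly when its left endpoint is $\equiv 1 \pmod b$, i.e.\ exactly when $x \equiv a \pmod b$), and the per-$j$ decoupling is immediate since each $y_j$ touches only $x$. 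The one external ingredient you rightly flag is the \NP-completeness of the source problem; this is indeed a classical result (Stockmeyer and Meyer, 1973; listed as problem AN2 in Garey and Johnson), stated for arbitrary, possibly repeated and non-coprime moduli, so a citation rather than a re-derivation suffices --- but it is a different citation than the paper's, which relies on Lagarias~\cite{lagarias1985simultaneous}. Comparing the two: the paper's route makes the structural point that simple SDBMs are exactly ``scaled-difference'' ILPs and inherits hardness for instances with two-sided symmetric bounds, while yours is more elementary and self-contained given the standard source problem, and it additionally shows hardness already for a very restricted shape (a single stride-$1$ hub variable with each strided variable constrained only against the hub), which is a slightly sharper statement than what the paper's reduction yields.
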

\begin{proof}
Let $x$ be a solution to the simple SDBM
\begin{align*}
x_1, \dots x_n \in \Zbb &&
-x_i + x_j \le c_{ij} &&
x_i \equiv 0 \mod d_i.
\end{align*}
Define $y$ by $y_i = x_i/d_i$. Then $x$ is a solution to the above SDBM iff $y$ is a solution to the following Integer Linear Program (ILP):
\begin{align*}
y_1, \dots y_n \in \Zbb &&
d_j x_j - d_i y_i \le c_{ij}.
\end{align*}

We now prove that solving ILPs of the above form is \NP-hard. Lagarias \cite{lagarias1985simultaneous} proved that it is \NP-hard to solve ILPs of the form
\begin{align*}
y, x_1, \dots x_n \in \Zbb
&& -\frac{s}{t} \le \frac{a_i}{b_i}y - x_i \le \frac{s}{t}
\end{align*}
where $s, t, a_i, b_i \in \Zbb$, $s, t, b_i \ge 1$ and $a_i \ne 0$. Note that if some $a_i < 0$, then 
\begin{align*}
    -\frac{s}{t} \le \frac{a_i}{b_i}y - x_i \le \frac{s}{t} \iff
    -\frac{s}{t} \le \frac{|a_i|}{b_i}y - (-x_i) \le \frac{s}{t}.
\end{align*}
Let $z_i = x_i$ if $a_i > 0$ and $z_i = -x_i$ if $a_i < 0$. Then
the system has a solution iff the following system does:
\begin{align*}
y, z_1, \dots z_n \in \Zbb
&& -\frac{s}{t} \le \frac{|a_i|}{b_i}y - z_i \le \frac{s}{t}
\end{align*}
Thus without loss of generality we can assume $a_i \ge 0$ as well. Now we set $A = \lcm(a_1, \dots a_d)$ and multiply each such pair of inequalities by $t b_i A / a_i$. Note that the latter quantity is a positive integer. We then obtain the system
\begin{align*}
y, z_1, \dots z_n \in \Zbb &&
-\frac{s b_i A}{a_i} \le A t y - \frac{A b_i t}{a_i} z_i \le \frac{s b_i A}{a_i}
\end{align*}
which is of the required form, as all variables always get the same coefficient and these coefficients are integers. We reduced the \NP-hard sub-case of ILP to a form equivalent to SDBM satisfiability, showing that the latter is also \NP-hard.
\end{proof}

\lemSdbmSolInnD*
\begin{proof}
Our proof goes along the same lines as the original paper~\cite{cgst1986sensitivity}. Let $y \in S$ and $z \in S \cap L$. Let $A_1$ be the rows of $A$ such that $A_1 z \ge A_1 y$ and let $A_2$ be the remaining rows, so we have $A_2 z < A_2 y$. Let $C = \{x \in \Rbb^n \mid A_1 x \ge 0 \land A_2 x \le 0\}$; clearly $z - y \in C$. Moreover, $C$ is a polyhedral cone. Let $v_1, \dots v_k$ be the generators of $C$; it can be shown that $v_i$ can be chosen such that they have integer elements and each element has absolute value at most $\MASD(A)$~\cite[Lemma 5.4 on page 103]{korte2013combinatorial}. Then $z - y = \sum_i \lambda_i v_i$ for $\lambda \ge 0$.

Let $x_\mu = z - \sum_i \mu_i v_i = y + \sum_i (\lambda_i - \mu_i) v_i$ for $0 \le \mu \le \lambda$; thus there are two equivalent representations for $x_\mu$. $x_\mu$ is a kind of generalized interpolation between $y$ and $z$. We show that $x_\mu \in S$ since $y$ and $z$ are. The first representation shows that $A_1 x_\mu \le A_1 z \le b_1$ where $b_1$ are the constant terms in $b$ for the rows of $A_1$. The second representation shows that $A_2 x_\mu \le A_2 y \le b_2$ where $b_2$ are the corresponding constant terms. Thus, $x_\mu \in S$.

Now set $\nu_i = \lfloor \lambda_i \rfloor_{\Dlcm}$. Then $x_\nu \in S$ as shown above. Moreover, by the first representation, we are just subtracting multiples of $D$ from each element of $z$ so $x_\nu$ satisfies the congruences, and we have $x_\nu \in S \cap L$. Finally, $x_\nu$ is not too far from $y$:
\begin{align*}
\|x_\nu - y\|_\infty &= \Bigg\|\sum_i (\lambda_i - \lfloor
\lambda_i \rfloor_{\Dlcm}) v_i\Bigg\|_\infty \\
&\le \sum_i (\lambda_i - \lfloor \lambda_i \rfloor_{\Dlcm}) \| v_i \|_\infty \\
& \le nD \MASD (A)
\end{align*}
where the last inequality is because we can assume that at most n of the $\lambda_i$ are non-zero by Carathéodory's theorem for conical hulls.
\end{proof}

\lemMASD*
\begin{proof}
Obviously $\MASD(A) \ge 1$ since we can achieve $1$ by just picking any element $\pm 1$ as the submatrix. We now show that $MASD(A) \le 1$. The given matrix can be thought of as an adjacency list of $m$ directed edges in a graph of $n$ vertices. We can assume that the endpoint with coefficient $-1$ is the source and the destination has $+1$. Let $B$ be a square submatrix, which corresponds to a subset of $k$ vertices and $k$ edges.

If some edge $e$ is chosen without any of its endpoints being chosen, the determinant is zero since $e$'s row only has zeros. If exactly one of $e$'s endpoints is chosen, then by Laplace expansion along $e$'s row, we obtain that the absolute determinant of $B$ is equal to that of the matrix obtained by removing $e$ and the one endpoint of it that is present. If by such removals we obtain an empty matrix then $|\det B| = 1$.

Thus we may assume that all edges that are chosen have both their endpoints chosen as well. Consider the undirected version of the graph. Since the number of edges is equal to the number of vertices in the graph, there exists a cycle. Therefore, by adding up the corresponding rows with coefficients $+1$ if the original direction of the edge is used and $-1$ if the reverse direction is used, we can obtain zero. Thus the matrix is not full rank and has determinant zero.
\end{proof}

\begin{lemma}
\label{lem:interval-cap}
Let $I_1, \dots I_n$ be intervals with $I_i = [\ell_i, r_i]$. If $|\cap_i I_i| = k$ for $k \in \Nbb$ then either some interval is of size $k$, or the intersection of some two intervals is of size $k$.
\end{lemma}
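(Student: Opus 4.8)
The plan is to work with the endpoints of the intersection directly. Write $\cap_i I_i = [L, R]$ where $L = \max_i \ell_i$ and $R = \min_i r_i$; the hypothesis $|\cap_i I_i| = k$ means that $R - L = k - 1$ when we count integer points, or more uniformly (and this is the cleaner framing) that the length of the intersection interval, measured as $R - L$, equals the length corresponding to a size-$k$ set. Pick indices $a$ and $b$ realizing the two maxima/minima, i.e.\ $\ell_a = L = \max_i \ell_i$ and $r_b = R = \min_i r_i$. Then $I_a \cap I_b = [\ell_a, r_b] \cap (\text{the other two endpoints})$, but since $\ell_a$ is the largest left endpoint and $r_b$ is the smallest right endpoint among \emph{all} the intervals, in particular $\ell_b \le \ell_a$ and $r_a \ge r_b$, so $I_a \cap I_b = [\ell_a, r_b] = [L, R]$, which has size exactly $k$. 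That already gives ``the intersection of some two intervals is of size $k$'' whenever $a \ne b$.

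The only remaining case is $a = b$, i.e.\ the same interval $I_a$ simultaneously has the largest left endpoint and the smallest right endpoint. But then $[L, R] = [\ell_a, r_a] = I_a$ itself, so $I_a$ is an interval of size $k$. Thus in every case one of the two stated alternatives holds.

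I do not anticipate a genuine obstacle here; the statement is essentially a repackaging of the fact that $\max$ and $\min$ are each attained at a single index. The one point to be careful about is the degenerate/empty situation: the lemma is only invoked (in the proof of \autoref{lem:hr-mod-sparse}) when the intersection is known to be non-empty, so $L \le R$ and $k \ge 1$, and in that regime the argument above goes through without caveats. I would also note in passing that if $n = 1$ the single interval trivially serves as ``some interval of size $k$'', so the base case needs no separate treatment. The write-up is therefore just: name the extremal indices, split on whether they coincide, and read off the conclusion.
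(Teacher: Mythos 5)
Your proof is correct and is essentially the same argument as the paper's: identify the intersection as $[\max_i \ell_i, \min_i r_i] = [\ell_a, r_b]$ and split on whether the extremal indices coincide (giving a single interval of size $k$) or not (giving a pairwise intersection $I_a \cap I_b$ of size $k$). The paper states this more tersely, but the content is identical.
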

\begin{proof}
Note that $\cap_i I_i = [\ell_i, r_j]$ for some $i, j \in [n]$. If $i = j$ then $|I_i| = k$, and otherwise $|I_i \cap I_j| = k$.
\end{proof}

\end{subappendices}

\newpage
\bibliographystyle{splncs04}
\bibliography{references}

\end{document}